\documentclass[11pt]{article}
\usepackage[T1]{fontenc}
\usepackage{graphicx}
\usepackage{subcaption}
\usepackage{todonotes}
\usepackage{mathtools}
\usepackage{soul,comment,cite}
\usepackage{amsthm}
\usepackage{amsmath}
\usepackage{amssymb}
\usepackage{systeme}

\usepackage{gensymb}
\usepackage[ruled,vlined]{algorithm2e}
\SetKwComment{Comment}{/* }{ */}
\SetKwInOut{Input}{input}\SetKwInOut{Output}{output}
\newtheorem{theorem}{Theorem}
\newtheorem{corollary}{Corollary}
\newtheorem{lemma}{Lemma}
\newtheorem{observation}{Observation}

\newtheorem{definition}{Definition}

\graphicspath{{./figures/}}
\usepackage[absolute]{textpos}

\usepackage{fullpage}
\usepackage[top=0.8in, bottom=0.9in, left=0.9in, right=0.9in]{geometry}

\usepackage[pdftex, plainpages = false, pdfpagelabels, 
                 bookmarks=false,
                 bookmarksopen = true,
                 bookmarksnumbered = true,
                 breaklinks = true,
                 linktocpage,
                 pagebackref,
                 colorlinks = true,  
                 linkcolor = blue,
                 urlcolor  = cyan,
                 citecolor = red,
                 anchorcolor = green,
                 hyperindex = true,
                 hyperfigures
                 ]{hyperref}

\def\calD{\mathcal{D}}

\def\calH{\mathcal{H}}

\def\wvd{\mathcal{V}\!\mathcal{D}}
\def\calA{\mathcal{A}}
\def\calL{\mathcal{L}}
\def\fdvd{\mathcal{F}\!\mathcal{V}\!\mathcal{D}}

\title{Computing Dominating Sets in Disk Graphs with Centers in Convex Position\thanks{A preliminary version of this paper will appear in {\em Proceedings of the 17th Latin American Theoretical Informatics Symposium (LATIN 2026)}.}}

\author{
Anastasiia Tkachenko\thanks{Kahlert School of Computing,
University of Utah, Salt Lake City, UT 84112, USA. {\tt anastasiia.tkachenko@utah.edu}}
\and
Haitao Wang\thanks{Kahlert School of Computing,
University of Utah, Salt Lake City, UT 84112, USA. {\tt haitao.wang@utah.edu}}
}

\date{}

\begin{document}

\maketitle

\vspace{-0.2in}

\begin{abstract}
Given a set $P$ of $n$ points in the plane and a collection of disks centered at these points, the disk graph $G(P)$ has vertex set $P$, with an edge between two vertices if their corresponding disks intersect.
We study the dominating set problem in $G(P)$ under the special case where the points of $P$ are in convex position. The problem is NP-hard in general disk graphs. Under the convex position assumption, however, we present the first polynomial-time algorithm for the problem. Specifically, we design an $O(k^2 n \log^2 n)$-time algorithm, where $k$ denotes the size of a minimum dominating set. For the weighted version, in which each disk has an associated weight and the goal is to compute a dominating set of minimum total weight, we obtain an $O(n^5 \log^2 n)$-time algorithm.
\end{abstract}

\emph{Keywords:} disk graphs, dominating sets, convex position

\section{Introduction}
\label{sec:intro}

Let $P = \{p_1, \dots, p_n\}$ be a set of $n$ points in the plane, where each point $p_i \in P$ is assigned a radius $r_{p_i}\geq 0$. Let $D_{p_i}$ denote the disk centered at $p_i$ with radius $r_{p_i}$. 
The \emph{disk graph} $G(P)$ has $P$ as its vertex set, with an edge between $p_i$ and $p_j$ if the two disks $D_{p_i}$ and $D_{p_j}$ intersect, i.e., $|p_ip_j|\leq r_{p_i}+r_{p_j}$, where $|p_ip_j|$ is the Euclidean distance of $p_i$ and $p_j$. Disk graphs, including the {\em unit-disk case}, where all disks have the same radius, arise naturally in numerous application domains, such as wireless sensor networks (where devices with heterogeneous transmission ranges communicate when within range)~\cite{ref:PerkinsAd99,ref:BalisterCo05,ref:KammerlanderC13,ref:NguyenDi10}, surveillance networks~\cite{ref:TruongMo12}, optimal facility location~\cite{ref:ClarkUn90,ref:WangA88}, etc. 

Many classical problems on disk graphs are still NP-hard~\cite{ref:ClarkUn90} and approximation algorithms have been developed, such as vertex cover~\cite{ref:ErlebachPo05,ref:LokshtanovA23,ref:LeeuwenBe06}, independent set~\cite{ref:ErlebachPo05},  feedback vertex set~\cite{ref:BandyapadhyaySu22}, dominating set~\cite{ref:GibsonAl10}, etc. One exception is that finding a maximum clique in a unit-disk graph can be solved in polynomial time~\cite{ref:ClarkUn90,ref:EppsteinGr09,ref:EspenantFi23}. But it has been a long standing open problem whether the clique problem on general disk graphs can be solved in polynomial time; see \cite{ref:BonamyEP21,ref:KeilTh25} for some recent progress on this problem.

We are interested in the dominating set problem in disk graphs. 
A \textit{dominating set} of $G(P)$ is a subset $S \subseteq P$ such that every vertex in $G(P)$ is either in $S$ or adjacent to a vertex in $S$. The \textit{dominating set problem} is to find a dominating set of minimum cardinality. In the {\em weighted case}, each point of $P$ has a weight, and the goal is to find a dominating set minimizing the total weight. 
The problem is NP-hard even in unit-disk graphs~\cite{ref:ClarkUn90}. Approximation algorithms have been proposed, e.g., ~\cite{ref:GibsonAl10, ref:HernandezAp22, ref:MustafaIm10, ref:DeGe23}.

\subsection{Convex position setting and previous work}
In light of the above hardness results, exploring structured settings that may allow efficient algorithms is of particular interest. 
In this paper, we consider the dominating set problem in disk graphs in a {\em convex position} setting where all points of $P$ are in convex position, that is, every point of $P$ appears as a vertex of the convex hull of $P$. This setting models deployments along perimeters (e.g., fences, shorelines, or rings of sensors) and, importantly, reveals strong structural relationships. 

In this convex position setting, Tkachenko and Wang~\cite{ref:TkachenkoDo25} studied the problem for the unit-disk case. Specifically, they gave an $O(n^3\log^2 n)$ time algorithm for the weighted dominating set problem, and an $O(kn\log n)$ time algorithm for the unweighted case, where $k$ is the size of a minimum dominating set.

Other geometric problems in convex position have also been studied. For the independent set problem, which is to find a maximum subset of vertices of $G(P)$ so that no two vertices have an edge, an $O(n^{7/2})$ time algorithm was given in \cite{ref:TkachenkoDo25} in the convex-position setting. The classical $k$-center problem among a set of points in the plane is to find $k$ congruent disks of smallest radius to cover all points. The problem is NP-hard. Choi, Lee, and Ahn~\cite{ref:ChoiEf23} studied the problem in the convex position setting and proposed an $O(n^3\log n)$ time algorithm. For the discrete version of the problem where the centers of the disks are required to be in the given point set, Tkachenko and Wang solved the problem in $O(n^2\log^2 n)$ time~\cite{ref:TkachenkoDo25}. Given a set of points in the plane and a number $k$, the dispersion problem is find $k$ points so that their minimum pairwise distance is maximized. The problem is also NP-hard~\cite{ref:WangA88}. Singireddy, Basappa, and Mitchell~\cite{ref:SingireddyAl23} studied the problem in the convex position setting and gave an $O(n^4k^2)$ time algorithm. An improved algorithm of $O(n^{7/2}\log n)$ time was later derived in \cite{ref:TkachenkoDo25}.

Even problems that are already polynomial-time solvable in the general case also attracted attention in the convex position setting, e.g., the classic linear-time algorithm of Aggarwal, Guibas, Saxe, and Shor~\cite{ref:AggarwalA89} to construct Voronoi diagrams for a set of 2D points in convex position. Refer to~\cite{ref:LingasOn86, ref:RichardsRe90, ref:ChazelleIm93, ref:TkachenkoCo25, ref:BiniazA16,ref:CagiMa18} for additional work in convex position setting.



\subsection{Our result}

We study the dominating set problem in disk graphs under the convex position setting. While polynomial-time algorithms are known for the unit-disk case~\cite{ref:TkachenkoDo25}, allowing disks of varying radii introduces significant challenges. For instance, although a polynomial-time algorithm for computing a maximum clique in unit-disk graphs was proposed more than 30 years ago~\cite{ref:ClarkUn90}, it remains unknown whether the same problem can be solved in polynomial time for general disk graphs.

Nevertheless, we attempt to extend the techniques of \cite{ref:TkachenkoDo25} to the general disk graph setting. This proves far from trivial, as many properties of the unit-disk case no longer hold. Even so, we uncover new structural observations and develop novel algorithmic techniques. Consequently, we establish that the dominating set problem in disk graphs under the convex position setting can be solved in polynomial time.

Specifically, we present an $O(n^5 \log^2 n)$-time algorithm for the weighted case. Furthermore, given a size bound $k$, we can compute a minimum-weight dominating set of size at most $k$ (if one exists) in $O(k^2 n^3 \log^2 n)$ time. Our approach is a dynamic programming algorithm that leverages a key structural property we prove, which we call the line-separable property. This property enables us to decompose the original problem into smaller subproblems amenable to dynamic programming.

For the unweighted case, we design a more efficient algorithm by employing a greedy strategy, achieving a running time of $O(k^2 n \log^2 n)$, where $k$ denotes the size of a minimum dominating set. In particular, when $k = O(1)$, our algorithm runs in $O(n \log^2 n)$ time. 


While problems in unit-disk graphs under the convex position setting have been studied previously, to the best of our knowledge, little work has been done on general disk graphs in convex position. We therefore hope that our result will stimulate further research in this direction.


\paragraph{Outline.}
The rest of the paper is organized as follows. After introducing notation in Section~\ref{sec:pre},  
Section~\ref{sec:structure} presents some structural properties that both our weighted and unweighted algorithms reply on. 
The algorithms for the weighted case and the unweighted case are given in Sections~\ref{sec:domwgt} and~\ref{sec:domunwgt}, respectively. 

\section{Notation}
\label{sec:pre}
We introduce some notations that will be used throughout the paper, in addition to those already defined in Section~\ref{sec:intro}, e.g., $P$, $n$, $G(P)$, $D_{p_i}$. 

For any two points $p$ and $q$ in the plane, we use $|pq|$ to denote their (Euclidean) distance. We also use $\overline{pq}$ to denote the line segment connecting them. 

Let $\calH(P)$ denote the convex hull of $P$. We assume that the points of $P$ are in convex position. Then $P$ can be considered a cyclic sequence around $\calH(P)$. Specifically, let $P = \langle p_1, p_2, \ldots, p_n \rangle$ represent a cyclic list of the points ordered counterclockwise along $\calH(P)$. We use a {\em sublist} to refer to a contiguous subsequence of $P$. Multiple sublists are said to be \textit{consecutive} if their concatenation is also a sublist. 
For any two points $p_i$ and $p_j$ in $P$, we define $P[i,j]$ as the sublist of $P$ from $p_i$ counterclockwise to $p_j$, inclusive, i.e., if $i \leq j$, then $P[i,j] = \langle p_i, p_{i+1}, \ldots, p_j \rangle$; otherwise, $P[i,j] = \langle p_i, p_{i+1}, \ldots, p_n, p_1, \ldots, p_j \rangle$. Note that if $i=j$, then $P[i,j]=\langle p_i\rangle$. 
We also denote by $P(i,j]$ the sublist $P[i,j]$ excluding $p_i$, 
and similarly for other variations, e.g., $P[i,j)$ and $P(i,j)$.

For any two points $p_i,p_j\in P$, we define $|D_{p_i}D_{p_j}|=|p_ip_j|-( r_{p_i}+r_{p_j})$, and we call it the {\em disk distance} between $D_{p_i}$ and $D_{p_j}$. If the two disks intersect, then $|D_{p_i}D_{p_j}|\leq 0$; otherwise, $|D_{p_i}D_{p_j}|> 0$. 

For any subset $P'\subseteq P$, let $\calD(P')=\{D_{p_i}\ |\ p_i\in P\}$. Sometimes we also say that disks of $\calD(P)$ are {\em input disks}. We emphasize that while the centers of the input disks, i.e., the points of $P$, are in convex position, the disks themselves may not be, i.e., it is possible that an input disk $D_{p_i}$ is in the interior of the convex hull of all disks of $\calD(P)$.






\section{Structural properties}
\label{sec:structure}
 
We begin by examining the structural properties of dominating sets in $G(P)$. We introduce these properties for the weighted case, which are applicable to the unweighted case too.

For a sublist $\alpha$ of $P$, we say that a point $p_i\in P$ {\em dominates} $\alpha$ if the disk $D_{p_i}$ intersects $D_{p}$ for all points $p\in \alpha$. For two points $p_i,p_j\in P$, if $D_{p_i}$ intersects $D_{p_j}$, then we also say that $p_i$ {\em dominates} $p_j$ (and similarly, $p_j$ dominates $p_i$). 

Suppose $S \subseteq P$ is a dominating set of $G(P)$.  Let $\calA$ be a partition of $P$ into (nonempty) disjoint sublists such that for every sublist $\alpha \in \calA$, there exists a point in $S$ that dominates $\alpha$.
An \textit{assignment} is a mapping $\phi: \calA \to S$ that assigns every sublist $\alpha \in \calA$ to exactly one point $p_i \in S$ such that $p_i$ dominates $\alpha$. 
For each $p_i \in S$, we define the \textit{group} of $p_i$, denoted by $\calA_{p_i}$, as the collection of all sublists $\alpha \in \calA$ that are assigned to $p_i$ under $\phi$. 
Depending on the context, $\calA_{p_i}$ may also refer to the set of points of $P$ in all its sublists. For example, if a point $p_j$ is in a sublist of $\calA_{p_i}$, then we may write ``$p_j\in \calA_{p_i}$''.
By definition, the groups of all points of $S$ are pairwise disjoint, and they together form a partition of $\calA$ and also form a partition of $P$.

An assignment $\phi: \calA \to S$ is \textit{line-separable} if for every two points $p_i,p_j\in S$, there exists a line $\ell$ such that the points of $\calA_{p_i}$ lie entirely on one side of $\ell$, while the points of $\calA_{p_j}$ lie on the other side.

The following lemma proves a {\em line-separable property}. As will be seen, the property is crucial for our dominating set algorithms.


\begin{lemma}
\label{lem:linesep}
Suppose $S$ is an optimal dominating set of $G(P)$. Then there exists a partition $\calA$ of $P$ and a line-separable assignment 
$\phi : \calA\rightarrow S$ such that (1) for every point $p_i\in S$, $p_i \in \calA_{p_i}$, i.e., the group $\calA_{p_i}$ contains $p_i$ itself, and (2) any two adjacent sublists of $\calA$ are assigned to different points of $S$.
\end{lemma}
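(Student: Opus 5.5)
The plan is to reduce line-separability to a purely combinatorial non-interleaving condition, and then remove interleavings by an exchange argument driven by a potential function.

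\textbf{Combinatorial reformulation.} Since $P$ is in convex position, two disjoint subsets $A,B\subseteq P$ can be separated by a line if and only if they do not \emph{interleave}. Here $A$ and $B$ interleave if there are $a_1,a_2\in A$ and $b_1,b_2\in B$ appearing in the cyclic order $a_1,b_1,a_2,b_2$ around $\calH(P)$. Indeed, if there is no interleaving, $A$ lies in one arc of the cycle $P$ restricted to $A\cup B$, and a line cutting $\calH(A\cup B)$ between the two arcs separates them. So it suffices to find a map $\phi: P\to S$ with three properties: every $p$ is dominated by $\phi(p)$, every $s\in S$ has $\phi(s)=s$, and no two groups interleave. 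Afterwards we define $\calA$ as the maximal runs of consecutive points with the same image under $\phi$. This makes condition~(2) automatic, and condition~(1) holds by construction.

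\textbf{Choosing $\phi$ by a potential.} Start from any map $\phi$ with $\phi(p)$ dominating $p$ and $\phi(s)=s$ for $s\in S$. This is possible because each $s\in S$ dominates itself, and every other point has a dominator since $S$ is dominating. Among all such maps, take one minimizing
\[
\Phi(\phi)=\sum_{p\in P}\bigl(|p\,\phi(p)|-r_{\phi(p)}\bigr),
\]
breaking ties by the number of interleaving quadruples (and then lexicographically). Points of $S$ are pinned to themselves, but points of $S$ may still play the role of $a_i$ or $b_j$ in an interleaving. The exchange step below must therefore also work when $a_1=s$, or must show that moving the other points suffices.

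\textbf{The key geometric lemma, and the main obstacle.} Suppose $\phi(a_1)=\phi(a_2)=s$ and $\phi(b_1)=\phi(b_2)=t$, with the chords $\overline{a_1a_2}$ and $\overline{b_1b_2}$ crossing at a point $x$. I would show that one of the swaps $a_i\mapsto t$ or $b_j\mapsto s$ is a legal reassignment that does not increase $\Phi$, and strictly decreases the number of interleaving quadruples.

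The intended tool is the crossing-chord inequality $|a_1b_j|+|b_ja_2|$-type bounds obtained via $x$, namely $|uv|\le |ux|+|xv|$. Combine these with the additively weighted comparison $|ps|-r_s$ versus $|pt|-r_t$. By minimality of $\Phi$, each $a_i$ is at least as close to $s$ as to $t$ in this weighted sense, and each $b_j$ is at least as close to $t$ as to $s$. Summing the four inequalities and comparing with the triangle inequalities through $x$, I expect to force equality, which is exactly the tie case handled by the secondary potential. Alternatively, the sum should yield an outright contradiction.

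The difficulty is that the region of points weighted-closer to $s$ than to $t$ is bounded by a hyperbola branch rather than a line, so a direct Voronoi argument fails. That is why I would work with the four-point sum inequality instead of with regions.

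I must also check that the swapped point is actually dominated by its new center, i.e.\ $|D_{a_i}D_t|\le 0$. Since $r_{a_i}$ does not enter the weighted comparison, I would argue this through the crossing point $x$. First, $t$ dominates both $b_1$ and $b_2$, so $D_t$ reaches near the segment $\overline{b_1b_2}$ around $x$. Second, $a_i$ lies on the other chord, which passes through $x$. Making this step rigorous, with arbitrary radii of $a_i$ and the $b_j$, is what I expect to be the hardest part. Once the lemma holds, the minimal $\phi$ has no interleaving, and the reformulation above completes the proof.
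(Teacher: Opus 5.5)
\textbf{There is a genuine gap.} The key geometric lemma is never proved, and the route you sketch for it cannot work. Your minimality inequalities are $|a_is|-r_s\le|a_it|-r_t$ and $|b_jt|-r_t\le|b_js|-r_s$. Summing all four cancels the radii and leaves
\[
|a_1s|+|a_2s|+|b_1t|+|b_2t|\le|a_1t|+|a_2t|+|b_1s|+|b_2s|,
\]
which involves only distances to the centers. The crossing point $x$ of $\overline{a_1a_2}$ and $\overline{b_1b_2}$ lies on no segment from $s$ or $t$. So triangle inequalities through $x$ cannot be matched to these terms.

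In fact the summed inequality is compatible with crossing chords. On the unit circle, put $a_1,b_1,a_2,b_2$ at angles $0^\circ,90^\circ,180^\circ,270^\circ$ and $s,t$ at $10^\circ,80^\circ$. The left side is about $4.33$ and the right side about $5.64$. The individual inequalities are jointly infeasible here: those for $a_2$ and $b_2$ demand $r_s-r_t\ge 0.46$ and $r_s-r_t\le -0.46$. Summing throws this away, so neither equality nor a contradiction is forced. Your secondary potential then has nothing to act on. The claim that some swap strictly lowers the number of interleaving quadruples is also unsupported, since a swap can create interleavings with third groups.

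\textbf{The missing idea} is to use condition (1) geometrically: the centers lie on $\calH(P)$ inside their own groups. Suppose $G_s\ni s$ and $G_t\ni t$ interleave. Then there are $a\in G_s$ and $b\in G_t$ with $\overline{sa}$ and $\overline{tb}$ crossing at an interior point $x$. Otherwise, reading around the hull from $s$ in both directions, every point of $G_s$ is met before any point of $G_t$, which means no interleaving. Now $a\neq s$ and $b\neq t$, so $a,b\notin S$; they are not pinned, and your worry about pinned points disappears. Only the two inequalities for $a$ and $b$ are needed. They give $|as|+|bt|\le|at|+|bs|$. On the other hand $|as|+|bt|=|ax|+|xs|+|bx|+|xt|>|at|+|bs|$, strictly because no three points of $P$ are collinear.

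This is the paper's proof in different clothing. Your $\Phi$-minimal map sends each free point to a nearest center under $|pq|-r_q$, which is the paper's additively weighted Voronoi assignment. The displayed inequality is the star-shapedness of the cells that the paper invokes.

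Also, the step you flag as hardest is immediate. If $|at|-r_t\le|as|-r_s$ and $s$ dominates $a$, then $|at|\le r_a+r_t$, so every such reassignment is legal. Because the inequality above is strict, no tie-breaking potential is needed either.
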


\begin{proof}

We first describe how to construct the partition $\calA$ and the assignment $\phi$, and then argue that $\phi$ is line-separable. 

\paragraph{Constructing $\calA$ and $\phi$.}
Let $\wvd(S)$ be the additively-weighted Voronoi diagram of the points of $S$ with the weight of each point $p\in S$ as $-r_p$, i.e., the weighted distance $d_p(q)$ between any point $q$ and $p$ is defined as $|pq|-r_p$. Let $R(p)$ denote the Voronoi cell of $p$ in $\wvd(S)$. Note that $R(p)$ is star-shaped and contains $p$~\cite{ref:SharirIn85, ref:FortuneA87}.

Since $S$ is an optimal dominating set, $S$ does not have two points $p_i$ and $p_j$ such that $D_{p_i}\subseteq D_{p_j}$ (since otherwise $S\setminus\{p_i\}$ is still a dominating set of $G(P)$, contradicting the optimality of $S$). Because of this, every point $p$ of $S$ has a non-empty Voronoi cell in $\wvd(S)$ that contains $p$~\cite{ref:SharirIn85, ref:FortuneA87}, i.e., $p\in R(p)$.

For each point $p_i\in S$, for every point $p \in P$ that is contained in $R(p_i)$, we assign $p$ to the group $\calA_{p_i}$. If $p$ lies on the boundary of two Voronoi cells, then we assign $p$ to an arbitrary one of them. In this way, each point $p\in P$ is assigned to exactly one point of $S$. In addition, since $p_i\in R(p_i)$, $p_i$ is assigned to its own group $\calA_{p_i}$ (this guarantees the first condition in the lemma statement). 

For each point $p_i\in S$, we make each contiguous maximal sequence of points of $\calA_{p_i}$ form a a sublist (this guarantees the second condition in the lemma statement). Let $\calA$ be the union of $\calA_{p_i}$ for all points $p_i\in S$. The assignment $\phi$ simply follows from the above definition of groups $\calA_{p_i}$. Note that $\calA$ satisfies both conditions in the lemma statement.

To show that $\calA$ is a desired partition of $P$ for $S$, it suffices to show,  for any point $p_i\in S$, $D_{p_i}$ intersects $D_{p}$ for all points $p\in \calA_{p_i}$. Indeed, consider a point $p\in \calA_{p_i}$. Since $S$ is a dominating set, $D_{p}$ must intersect $D_{p_j}$ for some point $p_j\in S$. Thus, $d_{p_j}(p)\leq r_{p}$. Since $p\in \calA_{p_i}$, by our construction of $\calA_{p_i}$, $p\in R(p_i)$. Hence, we have $d_{p_i}(p)\leq d_{p_j}(p)\leq r_p$ and thus $D_p$ and $D_{p_i}$ intersect.

\paragraph{Proving the line-separable property.}  
It remains to prove that the assignment $\phi$ is line-separable. 
Suppose to the contrary that there exist two groups $\calA_{p_i}$ and $\calA_{p_j}$ (with $p_i,p_j \in S$) that are not line-separable.  
Then, since $p_i\in \calA_{p_i}$ and $p_j\in \calA_{p_j}$, there must be a point $p\in \calA_{p_i}$ and a point $p'\in \calA_{p_j}$ such that $\overline{p_ip}$ and $\overline{p_jp'}$ cross each other. On the other hand, since $p\in \calA_{p_i}$, $p$ is in $R(p_i)$. As $R(p_i)$ is star-shaped with respect to $p_i$~\cite{ref:SharirIn85, ref:FortuneA87}, $\overline{p_ip}\subseteq R(p_i)$. Similarly, $\overline{p_jp'}\subseteq R(p_j)$. Since $R(p_i)$ and $R(p_j)$ are interior disjoint, $\overline{p_ip}$ cannot cross $\overline{p_jp'}$. We thus obtain contradiction. 
%
%
%
\end{proof}

Following the notation in Lemma~\ref{lem:linesep},
for each center $p_i\in S$, we call the sublist of $\calA_{p_i}$ containing $p_i$ the 
\textit{main sublist} of $p_i$, and all other sublists of $\calA_{p_i}$ are called \textit{secondary sublists}. 

\paragraph{Remark.}
In the unit-disk case, for any optimal dominating set $S$, there always exists a partition $\calA$ and an assignment of $\phi$ such that for each $p_i\in S$, the group $\calA_{p_i}$ contains at most two sublists~\cite{ref:TkachenkoDo25}. This property leads to a more efficient algorithm for the unit-disk case in \cite{ref:TkachenkoDo25}. Unfortunately, the property no longer holds anymore for our general disk problem. Refer to Figure~\ref{fig:OK} for an example, in which there is a large disk $D_{p_i}$ while other disks are disjoint and arranged in such a way that they alternatively intersect and avoid $D_{p_i}$. Assuming that all points have the same weight, $p_i$ will be included in any optimal dominating set, and $p_i$ will be assigned $\Omega(n)$ sublists. Comparing with the unit-disk case, this ``unbounded'' group size makes it substantially more challenging to design efficient algorithms. Nonetheless, we demonstrate below that by relying solely on the line-separable property from Lemma~\ref{lem:linesep}, we can still derive a polynomial-time algorithm.
\medskip

\begin{figure}[t]
\begin{minipage}[t]{\textwidth}
\begin{center}
\includegraphics[trim={0 3.5cm 0 1cm},clip,height=1.7in]{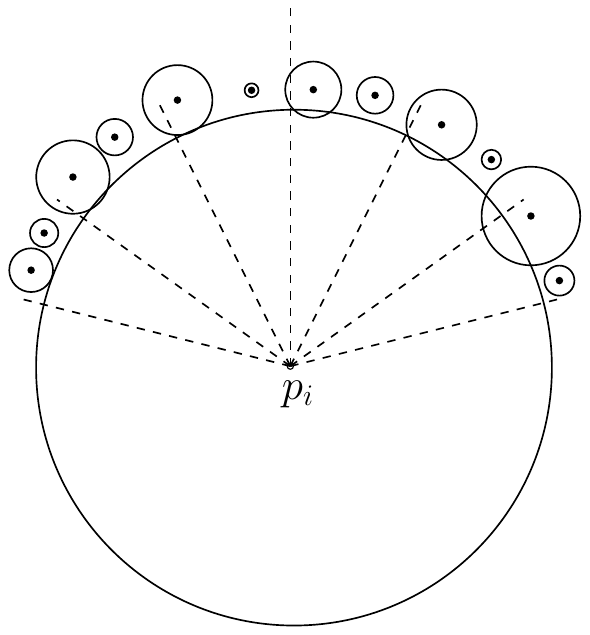}
\caption{\footnotesize Illustrating an example in which $|\calA_{p_i}|=\Omega(n)$.}
\label{fig:OK}
\end{center}
\end{minipage}
\vspace{-0.1in}
\end{figure}

We further have the following lemma, which will be instrumental in our algorithm design. 

\begin{lemma}\label{lem:onesizegroup}
Let $S$ be an optimal dominating set of $G(P)$, and let $\phi : \calA \rightarrow S$ be the assignment given by Lemma~\ref{lem:linesep}. 
Then, $S$ has a point $p_i$ whose group $\calA_{p_i}$ only has a single sublist (which is the main sublist of $p_i$).
\end{lemma}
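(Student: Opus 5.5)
The plan is a purely combinatorial argument about how the groups can be arranged in the cyclic order of $P$, using only the line-separable property and condition (2) of Lemma~\ref{lem:linesep}.

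\textbf{Step 1 (no interleaving).} Let $p_a,p_b\in S$ be distinct, and let $\ell$ be a line separating $\calA_{p_a}$ from $\calA_{p_b}$. The points of $P$ are in convex position, so $\ell$ crosses the boundary of $\calH(P)$ in at most two points. Hence the points of $P$ strictly on either side of $\ell$ form a contiguous sublist, and points on $\ell$ can be absorbed so that this still holds. Consequently, $\calA_{p_a}$ is contained in one sublist $\beta$ of $P$ and $\calA_{p_b}$ in the complementary sublist $P\setminus\beta$. In particular, there are no four points appearing in cyclic order $x,y,x',y'$ with $x,x'\in\calA_{p_a}$ and $y,y'\in\calA_{p_b}$. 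So, viewing $\calA$ as a cyclic sequence of sublists labeled by their assigned points of $S$, no two labels interleave.

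\textbf{Step 2 (gaps and an extremal choice).} If every group has a single sublist, we are done. Otherwise, for a point $p_a\in S$ whose group has at least two sublists, call a \emph{gap} of $p_a$ the part of $P$ strictly between two cyclically consecutive sublists of $\calA_{p_a}$. By condition (2) of Lemma~\ref{lem:linesep}, consecutive sublists of $\calA$ have different labels, so every gap is nonempty and consists of whole sublists of $\calA$ assigned to points other than $p_a$. Among all such pairs (point, gap), choose one, say $p_a$ with gap $\gamma$, minimizing the number of sublists of $\calA$ in $\gamma$.

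\textbf{Step 3 (the contradiction that yields the claim).} Let $p_b$ be the label of some sublist in $\gamma$. If $\calA_{p_b}$ had a point outside $\gamma$, then that point, a point of $p_b$ inside $\gamma$, and the two sublists of $\calA_{p_a}$ bounding $\gamma$ would interleave, contradicting Step 1. So all of $\calA_{p_b}$ lies inside $\gamma$.

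Suppose $\calA_{p_b}$ has at least two sublists. Going around the cycle, the gaps of $p_b$ partition the complement of $\calA_{p_b}$. Exactly one of these gaps contains $P\setminus\gamma$, which includes points of $\calA_{p_a}$. Since $\calA_{p_b}$ has at least two sublists, it has at least two gaps, so there is another gap of $p_b$ lying entirely inside $\gamma$. That gap omits at least the sublists of $\calA_{p_b}$ themselves, so it contains strictly fewer sublists than $\gamma$. This contradicts the minimality of $\gamma$. Therefore $\calA_{p_b}$ consists of a single sublist, which must be its main sublist because $p_b\in\calA_{p_b}$ by condition (1).

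\textbf{Main obstacle.} The delicate point is Step 1, specifically making precise that a separating line splits a convex-position set into two cyclically contiguous pieces even when some points lie on $\ell$. I would handle this by perturbing $\ell$ slightly: rotate or translate it so it passes through no point of $P$ while still separating the two groups, or absorb collinear points into the appropriate side. Everything after that is a standard non-crossing (laminar) family argument.
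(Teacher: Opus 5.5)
Your proof is correct and is essentially the paper's argument in combinatorial form. The paper joins representative points of consecutive sublists of each group by diagonals, uses line-separability to conclude that these diagonals do not cross, and picks a diagonal that cuts off a region containing no other diagonal; that region plays exactly the role of your minimal gap $\gamma$, and your Step~1 makes explicit the use of convex position in turning line-separability into non-interleaving, which the paper leaves implicit.
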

\begin{proof}
For each sublist $\alpha$ of $\calA$, we pick a point of $\alpha$ as its {\em representative point}. For each point $p_i\in S$, we order the sublists of $\calA_{p_i}$ along $\calH(P)$ and for every two adjacent sublists in that order we connect their representative points by a line segment, called a {\em diagonal} of $P$. Let $E$ be the set of all diagonals defined above for all $p_i\in S$. 

As the assignment $\phi$ is line-separable, no two diagonals of $E$ cross each other. Hence, there must exist a diagonal $d\in E$ that divides $\calH(P)$ into two subpolygons one of which does not contain any other diagonal; let $R$ denote such a subpolygon. Note that $d$ connects representative points of two sublist from a group $\calA_{p_i}$ of some point $p_i$. Since no two sublists of $\calA_{p_i}$ are consecutive along $\calH(P)$ by the second property in Lemma~\ref{lem:linesep}, $R$ must contain a sublist $\alpha$ belonging to a group $\calA_{p_j}$ with $p_j\neq p_i$. As $R$ does not contain any diagonal other than $d$, the representative point of $\alpha$ is not connected by any diagonal of $E$. This means that $\alpha$ is the only sublist of $\calA_{p_j}$. The lemma thus follows. 
\end{proof}

\section{The weighted dominating set problem}
\label{sec:domwgt} 

In this section, we present our algorithm for computing a minimum-weight dominating set in the disk graph $G(P)$. 

For each point $p_i \in P$, let $w_i$ denote its weight. We assume $w_i > 0$ since otherwise $p_i$ can always be included in the solution. For any subset $P' \subseteq P$, define $w(P')= \sum_{p_i \in P'} w_i$.

We will focus on the following \textit{bounded-size problem}: Given an integer $k$, find a dominating set $S \subseteq P$ of minimum total weight in $G(P)$ subject to $|S|\leq k$. Solving this problem with $k = n$ yields a minimum-weight dominating set for $G(P)$.
Let $W^*$ denote the total weight of a minimum-weight dominating set of size at most $k$. 

The rest of this section is organized as follows. Section~\ref{sec:overview} first introduces a new concept {\em rank-$t$ centers}, which is crucial to our algorithm, and also provides an overview about our algorithm. Section~\ref{sec:description} gives the details of the algorithm while the correctness is proved in Section~\ref{sec:correct}. The algorithm implementation and time analysis are finally discussed in Section~\ref{sec:time}.

\subsection{Rank-$\boldsymbol{t}$ centers and algorithm overview}
\label{sec:overview}


Let $S$ be an optimal dominating set of $G(P)$ with $|S|\leq k$, and let $\phi : \calA \rightarrow S$ be the assignment given by Lemma~\ref{lem:linesep} (note that the lemma was originally for an optimal dominating set without bounded-size constraint, but the proof works for bounded-size optimal dominating set too).

Let $\alpha_1, \alpha_2, \ldots, \alpha_m$ be the sublists of $\calA$ following the order of $P$, i.e., counterclockwise around $\calH(P)$. For any two sublists $\alpha_g$ and $\alpha_h$, we define $\calA[g,h]$ as the set of the sublists from $\alpha_g$ counterclockwise to $\alpha_h$ inclusive. Depending on the context, $\calA[g,h]$ may also refer to the union of all these sublists, which form a sublist of $P$. 
We define $S[g,h]$ as the subset of points $p\in S$ whose groups $\calA_p$ contain a sublist of $\calA[g,h]$. 
Note that if $g=h$, then $\calA[g,h]$ refers to $\{\alpha_g\}$. 

\begin{definition}\label{def:tcenter}
We say that a point $p_i\in S$ is a {\em rank-$t$ center} for $\calA[g,h]$ if the following conditions are satisfied: (1) $|S[g,h]|\leq t$; (2) at least one of $\alpha_g$ and $\alpha_h$ is in $\calA_{p_i}$; (3) the main sublist of $p_i$ is in $\calA[g,h]$; (4) if $t>1$, then for each point $p_j\in S[g,h]\setminus\{p_i\}$, $\calA_{p_j}\subseteq \calA[g,h]$, i.e., all sublists in the group $\calA_{p_j}$ are in $\calA[g,h]$; otherwise, $\calA_{p_i}\subseteq \calA[g,h]$.     
\end{definition}
Note that this definition does not require $\calA_{p_i}\subseteq \calA[g,h]$ if $t>1$, and therefore it is possible that some secondary sublists of $\calA_{p_i}$ are not in $\calA[g,h]$. We also say that $\calA[g,h]$ in the above definition is a {\em rank-$t$ sublist} of $p_i$. 
In addition, in the second condition of the definition, if exactly one of $\alpha_g$ and $\alpha_h$ is in $\calA_{p_i}$, then 
we call $p_i$ a rank-$t$ {\em open} center, and if both $\alpha_g$ and $\alpha_h$ are in $\calA_{p_i}$, then 
$p$ a rank-$t$ {\em closed} center. 

To see why the concept rank-$t$ center is useful, observe that every point $p_i\in S$ is a rank-$k$ center for some sublist $\calA[g,h]$ that is $P$. Indeed, let $\alpha_g\in \calA$ be the main sublist of $p_i$ for some index $g$. Then, since $|S|\leq k$, $p_i$ is a rank-$k$ center for $\calA[g-1,g]$ or $\calA[g,g+1]$, where the operations ``$g-1$'' and ``$g+1$'' are module $m$ (we follow this convention in the rest of this section). Note that both $\calA[g-1,g]$ and $\calA[g,g+1]$ are $P$ (and both $S[g-1,g]$ and $S[g,g+1]$ are $S$). 

In addition, at least one point $p_i\in S$ is a rank-$1$ center of some sublist of $\calA$. Indeed, by Lemma~\ref{lem:onesizegroup}, there exists a center $p_i\in S$ whose group $\calA_{p_i}$ has only its main sublist, denoted by $\alpha_g$. Then, $p_i$ is a rank-$1$ center of the sublist $\calA[g,g]=\{\alpha_g\}$. 

\paragraph{Algorithm overview.}
Our algorithm is a dynamic program with $k$ iterations. In each $t$-th iteration, $1 \leq t \leq k$, we compute for each point $p_i\in P$ a set $\calL_t(i)$ of $O(n^2)$ sublists of $P$, with each sublist $L \in \calL_t(i)$ associated with a value $w'(L)$ and a subset $S_L \subseteq P$, such that the following {\em algorithm invariants} are maintained: (1) $w(S_L) \leq w'(L)$; (2) $S_L$ dominates $L$; (3) $p_i\in S_L$; (4) $|S_L|\leq t$. Define $\calL_t=\cup_{p_i\in P}\calL_t(i)$. 

To argue the correctness of the algorithm, we will show that for each point $p_i\in P$ and each rank-$t$ sublist $\calA[g,h]$ of $p_i$, $\calL_t(i)$ must contain a sublist $L$ with $\calA[g,h]\subseteq L$ and $w(S_L)\leq w'(L)\leq w(S[g,h])$. Because every point $p_i\in S$ is a rank-$k$ center for some sublist that is $P$, after the $k$-th iteration, $\calL_k$ must contain a sublist $L$ that is $P$ with $w(S_L)\leq w'(L) \leq w(S)=W^*$, implying that $S_L$ is an optimal dominating set (of size at most $k$). As such, after $k$ iterations, among all sublists in $\calL_k$ that are $P$, we find the one $L$ with minimum $W'(L)$ and return $S_L$ as the optimal dominating set.
Note that because at least one point $p_i\in S$ is a rank-$1$ center of some sublist of $\calA$, our algorithm starts with $t=1$. 


\medskip
The following notation will be used in the rest of this section. 
\begin{definition}
\label{def:ab}
For two points $p_i,p_j\in P$ ($p_i=p_j$ is possible), define $a_{i}^j$ as the index of the first point $p$ of $P$ counterclockwise from $p_j$ such that $|D_{p_i}D_p|> 0$, and $b_{i}^j$ the index of the first point $p$ of $P$ clockwise from $p_j$ such that $|D_{p_i}D_p|> 0$ (if $|D_{p_i}D_{p_j}|>0$, then $a_{i}^j=b_{i}^j=j$). If $|D_{p_i}D_p| \leq 0$ for all points $p\in P$, then let $a_{i}^j=b_{i}^j=0$.   
\end{definition}

\subsection{Algorithm description}
\label{sec:description}
We now describe the details of each $t$-th iteration of the algorithm with $1\leq t\leq k$. 

Initially, $t=1$, and our algorithm computes two indices $a_i^i$ and $b_i^i$ as defined in Definition~\ref{def:ab} for each $p_i\in P$. We will show in Lemma~\ref{lem:firstout} that this can be done in $O(n \log^2 n)$ time.
Then, for each $p_i\in P$, let $L=P(b_i^i, a_i^i)$, $S_L=\{p_i\}$, $w'(L)=w_i$, and $\calL_1(i)=\{L\}$. Obviously, all algorithm invariants hold for $L$. This completes the first iteration of the algorithm. 

Suppose that for each $t'$, $1 \leq t' \leq t - 1$, we have computed a collection $\calL_{t'}(i)$ of $O(n^2)$ sublists for each $p_i\in P$, with each sublist $L \in \calL_{t'}(i)$ associated with a value $w'(L)$ and a point set $S_L \subseteq P$, such that the algorithm invariants hold for $L$, i.e., $w(S_L) \leq w'(L)$, $S_L$ dominates $L$, $p_i\in S_L$, and $|S_L|\leq t'$.


The $t$-th iteration of the algorithm works as follows. For each point $p_i \in P$, we perform a counterclockwise/clockwise processing procedure for each point $p_j$, and also perform a bidirectional preprocessing procedure. The details are given below.

\subsubsection{Counterclockwise/clockwise processing procedures}
For each other point $p_j \in P$, we perform a \textit{counterclockwise processing procedure}, as follows. 
For each $t'$ with $1\leq t'\leq t-1$, and each point $p_z\in P[i,j]$, we do the following. Refer to Figure~\ref{fig:sublist10}.

We first perform a \textit{minimum-value enclosing sublist query} on $\calL_{t'}(i)$ to find the sublist $L_1\in \calL_{t'}(i)$ of minimum $w'(L_1)$ such that $P[i, z]\subseteq L_1$; we will discuss how to solve the query in Lemma~\ref{lem:enclosequery}. Let $p_{z_1}$ be the counterclockwise endpoint of $L_1$. Then we perform another minimum-value enclosing sublist query on $\calL_{t - t'}$ to find the sublist $L_2\in \calL_{t - t'}$ of minimum $w'(L_2)$ with $P[z_1+1, j]\subseteq L_2$. Let $p_{z_2}$ be the counterclockwise endpoint of $L_2$. Next, we compute the index $a_i^{z_2+1}$. 

\begin{figure}[h]
\begin{minipage}[h]{0.49\textwidth}
\begin{center}
\includegraphics[height=2.0in]{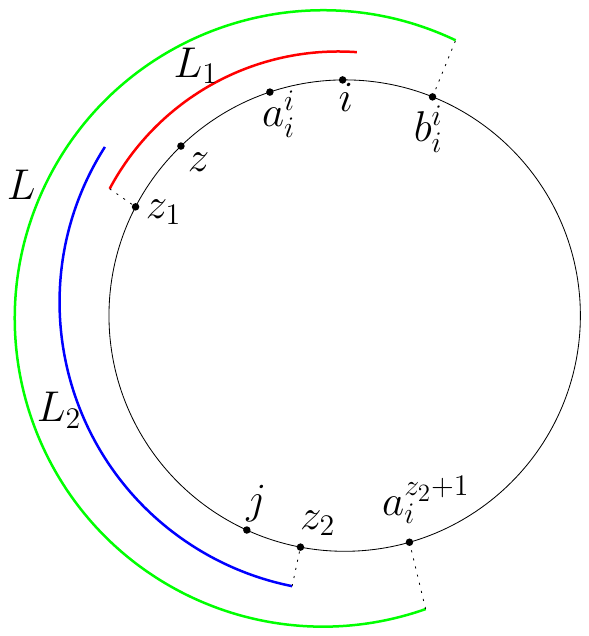}
\caption{\footnotesize Illustrating the relative positions of points of $P$ (only their indices are shown): the circle represent $\calH(P)$.}
\label{fig:sublist10}
\end{center}
\end{minipage}
\hspace{0.05in}
\begin{minipage}[h]{0.49\textwidth}
\begin{center}
\includegraphics[height=1.8in]{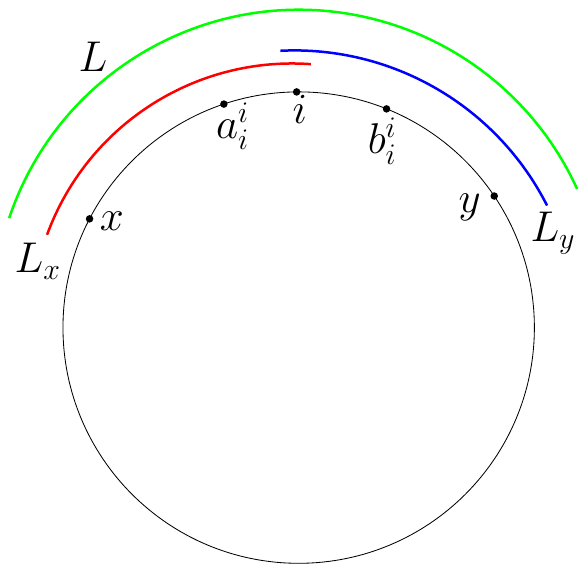}
\caption{\footnotesize Illustrating $L_x$, $L_y$, and $L$.}
\label{fig:sublist20}
\end{center}
\end{minipage}
\vspace{-0.1in}
\end{figure}

By construction, the union of the following four sublists are consecutive and thus form a sublist of $P$: $P(b_i^i, a_i^i)$, $L_1$, $L_2$ and $P(z_2, a_i^{z_2+1})$. Denote this combined sublist by $L(t', m)$, or $L$ for simplicity if $t'$ and $m$ are clear from the context. We set $S_L= S_{L_1} \cup S_{L_2}$ and $w'(L) = w'(L_1) + w'(L_2)$.
\begin{observation}
All algorithm invariants hold for $L$. 
\end{observation}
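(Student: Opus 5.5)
We check the four algorithm invariants for $L = P(b_i^i, a_i^i)\cup L_1\cup L_2\cup P(z_2, a_i^{z_2+1})$, with $S_L = S_{L_1}\cup S_{L_2}$ and $w'(L)=w'(L_1)+w'(L_2)$. We use only the invariants already established for $L_1\in\calL_{t'}(i)$ and for $L_2\in\calL_{t-t'}$, together with Definition~\ref{def:ab}.

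\emph{Weight and size invariants.} Since $L_1\in\calL_{t'}(i)$, we have $w(S_{L_1})\le w'(L_1)$, $p_i\in S_{L_1}$ and $|S_{L_1}|\le t'$. Since $L_2\in\calL_{t-t'}$, it lies in $\calL_{t-t'}(j')$ for some $p_{j'}$, so $w(S_{L_2})\le w'(L_2)$ and $|S_{L_2}|\le t-t'$. All weights are positive, so
\[
w(S_L)\le w(S_{L_1})+w(S_{L_2})\le w'(L_1)+w'(L_2)=w'(L).
\]
This gives invariant~(1). Invariant~(3) follows from $p_i\in S_{L_1}\subseteq S_L$. Invariant~(4) follows from $|S_L|\le |S_{L_1}|+|S_{L_2}|\le t'+(t-t')=t$.

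\emph{Domination invariant.} We show that every point of $L$ is dominated by $S_L$, taking the four pieces in turn.
\begin{itemize}
\item Points of $L_1$ are dominated by $S_{L_1}\subseteq S_L$, and points of $L_2$ by $S_{L_2}\subseteq S_L$.
\item For $p\in P(b_i^i,a_i^i)$: by Definition~\ref{def:ab}, $a_i^i$ and $b_i^i$ are the first points counterclockwise and clockwise from $p_i$ whose disks are disjoint from $D_{p_i}$. Hence every point strictly between them (through $p_i$) satisfies $|D_{p_i}D_p|\le 0$. So it is dominated by $p_i\in S_L$.
\item For $p\in P(z_2,a_i^{z_2+1})$: every point of $P[z_2+1,a_i^{z_2+1})$ has disk meeting $D_{p_i}$, by Definition~\ref{def:ab} applied from $p_{z_2+1}$. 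So these points are again dominated by $p_i$.
\end{itemize}

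\emph{Degenerate cases.} Two cases need a word.
\begin{itemize}
\item If $|D_{p_i}D_{p_{z_2+1}}|>0$, then $a_i^{z_2+1}=z_2+1$ and the last piece is empty.
\item If $a_i^{z_2+1}=0$ or $a_i^i=b_i^i=0$, every disk meets $D_{p_i}$ and the corresponding piece is interpreted as all of $P$ (or the appropriate remainder). Domination by $p_i$ still holds.
\end{itemize}

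\emph{Main obstacle.} The step needing care is that $L$, defined as the union, is a genuine sublist, i.e., these four pieces are consecutive, as the construction asserts. The plan is to take this from the construction: $L_1\supseteq P[i,z]$ contains $p_i$ and ends at $p_{z_1}$; $L_2\supseteq P[z_1+1,j]$ starts no later than $p_{z_1+1}$ and ends at $p_{z_2}$; and the last piece starts right after $p_{z_2}$. Overlaps are harmless, because domination and the weight bound hold for unions regardless of overlap. If some union wraps all the way around the cycle, we simply treat $L$ as $P$. The statement is then immediate.
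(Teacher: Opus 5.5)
Your proposal is correct and follows essentially the same route as the paper. The invariants for $L_1\in\calL_{t'}(i)$ and $L_2\in\calL_{t-t'}$ give the weight bound (using positive weights for the union), $p_i\in S_L$, $|S_L|\le t$, and domination of $L_1\cup L_2$, while Definition~\ref{def:ab} shows that $p_i$ dominates $P(b_i^i,a_i^i)$ and $P(z_2,a_i^{z_2+1})$; your remarks on degenerate index values and on why the four pieces form a contiguous sublist (or all of $P$) fill in details the paper leaves to ``by construction''.
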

\begin{proof}
We argue that the algorithm invariants hold for $L$, i.e., $w(S_L) \leq w'(L)$, $L$ is dominated by $S_L$, $p_i\in S_L$, and $|S_L| \leq t$. Indeed, since $L_1\in \calL_{t'}(i)$, by the algorithm invariants for $t'$, $w(S_{L_1}) \leq w'(L_1)$, $L_1$ is dominated by $S_{L_1}$, $p_i \in S_{L_1}$, and $|S_{L_1}| \leq t'$. Similarly, since $L_2\in \calL_{t - t'}$, by the algorithm invariants for $t-t'$, $w(S_{L_2}) \leq w'(L_2)$, $L_2$ is dominated by $S_{L_2}$, and $|S_{L_2}| \leq t - t'$. In addition, by definition, points of $P(b_i^i, a_i^i)$ and $P(z_2, a_i^{z_2+1})$ are dominated by $p_i$. Combining all these, we obtain that all algorithm invariants hold for $L$.   
\end{proof}

Among all such sublists $L(t', m)$ with $1\leq t'\leq t-1$ and $p_z\in P[i,j]$, we keep the one with minimum $w'(L(t',m))$ and add it to $\calL_t(i)$. This finishes the counterclockwise processing procedure for $p_j$. 

Symmetrically, we also perform a \textit{clockwise processing procedure} for $p_j$, which adds at most one sublist to $\calL_t(i)$. 

Performing these two procedures for all $p_j\in P$ adds $O(n)$ sublists to $\calL_t(i)$. 




\subsubsection{Bidirectional processing procedures}
We also perform \textit{bidirectional processing procedures} for $p_i$. 
For each pair of points $(p_x,p_y)$ such that $p_y,p_i,p_x$ are in counterclockwise order along $\calH(P)$, we do the following for each $t'$ with $2 \leq t' \leq t - 1$. Refer to Figure~\ref{fig:sublist20}.

We perform a minimum-value enclosing sublist query on $\calL_{t'}(i)$ to find the sublist $L_x\in \calL_{t'}(i)$ of minimum  $w'(L_x)$ such that $P[i,x]\subseteq L_x$. Similarly, we perform a minimum-value enclosing sublist query on $\calL_{t+1-t'}(i)$ to find the sublist $L_y\in \calL_{t+1-t'}(i)$ of minimum $w'(L_y)$ such that $P[y,i]\subseteq L_y$. By definition, the union of $P(b_i^i,a_i^i)$, $L_x$, and $L_y$ are consecutive and thus form a sublist of $P$, denoted by $L(x,y,t')$ or simply $L$. We set $S_L = S_{L_x} \cup S_{L_y}$ and $w'(L)= w'(L_x) + w'(L_y) - w_i$. 
\begin{observation}
All algorithm invariants hold for $L$. 
\end{observation}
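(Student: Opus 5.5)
We check the four invariants for $L=L(x,y,t')$ one at a time. Each follows from the invariants already established for $L_x\in\calL_{t'}(i)$ and $L_y\in\calL_{t+1-t'}(i)$, with one inclusion--exclusion step for the weight. The main point to watch is that $p_i$ lies in both $S_{L_x}$ and $S_{L_y}$: this is what lets us subtract $w_i$ and also saves one unit in the size bound.

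\emph{Domination and membership of $p_i$.} The sublist $L$ is the union of $P(b_i^i,a_i^i)$, $L_x$ and $L_y$. By the invariants, $S_{L_x}$ dominates $L_x$ and $S_{L_y}$ dominates $L_y$, so both are dominated by $S_L=S_{L_x}\cup S_{L_y}$. Every point of $P(b_i^i,a_i^i)$ satisfies $|D_{p_i}D_p|\le 0$ by Definition~\ref{def:ab}. Since $p_i\in S_{L_x}\subseteq S_L$, that piece is dominated as well. This also gives invariant (3), $p_i\in S_L$. (If $a_i^i=b_i^i=0$, then $p_i$ dominates all of $P$ and the conclusion is trivial.)

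\emph{Size.} Because $p_i$ belongs to both sets,
\[
|S_L| = |S_{L_x}|+|S_{L_y}|-|S_{L_x}\cap S_{L_y}| \le t' + (t+1-t') - 1 = t .
\]

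\emph{Weight.} The sets may overlap in more than $p_i$, so we use the inequality form of inclusion--exclusion. Since all weights are positive and $p_i\in S_{L_x}\cap S_{L_y}$,
\[
w(S_L)=w(S_{L_x})+w(S_{L_y})-w(S_{L_x}\cap S_{L_y})\le w(S_{L_x})+w(S_{L_y})-w_i .
\]
By the invariants, $w(S_{L_x})\le w'(L_x)$ and $w(S_{L_y})\le w'(L_y)$. Hence $w(S_L)\le w'(L_x)+w'(L_y)-w_i=w'(L)$.

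Nothing here is a real obstacle. The only subtle step is this last weight bound: subtracting only $w_i$ is safe precisely because $w(S_{L_x}\cap S_{L_y})\ge w_i$, which needs positive weights and $p_i$ in both sets. We also rely on the construction's claim that $P(b_i^i,a_i^i)$, $L_x$ and $L_y$ together form a contiguous sublist. That fact is only needed for $L$ to be a legitimate sublist, not for the invariants themselves.
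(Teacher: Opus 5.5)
Your proposal is correct and follows the paper's proof. Like the paper, you get domination, $p_i\in S_L$, and $|S_L|\le t'+(t+1-t')-1=t$ from three facts: the invariants of $L_x$ and $L_y$, $p_i$ dominating $P(b_i^i,a_i^i)$, and $p_i$ lying in both $S_{L_x}$ and $S_{L_y}$; you then bound $w(S_L)\le w(S_{L_x})+w(S_{L_y})-w_i$ in the same way, only spelling out the inclusion--exclusion justification (positive weights and $p_i\in S_{L_x}\cap S_{L_y}$) that the paper leaves implicit.
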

\begin{proof}
Indeed, by the algorithm invariants for $t'$ and $t-t'+1$, the sublists $L_x$ and $L_y$ are dominated by $S_{L_x}$ and $S_{L_y}$, respectively, $w(S_{L_x}) \leq w'(L_x)$, $w(S_{L_y}) \leq w'(L_y)$, $p_i\in S_{L_x}$, $p_i\in S_{L_y}$, $|S_{L_x}|\leq t'$, and $|S_{L_y}|\leq t-t'+1$.
Since both $S_{L_x}$ and $S_{L_y}$ include $p_i$ and $P(b_i^i,a_i^i)$ is dominated by $p_i$, $S_L$ has size at most $t$ and dominates $L$. Also, $w(S_L) \leq w(S_{L_x}) + w(S_{L_y}) - w_i \leq w'(L_x) + w'(L_y) - w_i = w'(L)$.
Therefore, the algorithm invariants hold for $L$.
\end{proof}

For a fixed pair $(p_x, p_{y})$, among all sublists $L(x,y,t')$, $2\leq t'\leq t-1$, we keep the one with minimum $w'(L(x,y,t'))$ and add it to $\calL_t(i)$. In this way, the bidirectional processing procedure for $p_i$ adds $O(n^2)$ sublists to $\calL_t(i)$. 

\subsubsection{Summary}
In summary, the $t$-th iteration of the algorithm computes $O(n^2)$ sublists for each $\calL_t(i)$, and in total computes $O(n^3)$ sublists for $\calL_t$.

After the $k$-th iteration, among all sublists of $\calL_k$ that are $P$, we find the one $L^*$ of minimum $w'(L^*)$ and return $S_{L^*}$ as our optimal dominating set of size at most $k$. If no sublist of $\calL_k$ is $P$, then we report that a dominating set of size at most $k$ does not exist. 

\subsection{Algorithm correctness}
\label{sec:correct}

In this section, we will prove the following lemma, which establishes the correctness of the algorithm. 
\begin{lemma}\label{lem:correctweight}
$S_{L^*}$ is an optimal dominating set and $W^*=w'(L^*)$.
\end{lemma}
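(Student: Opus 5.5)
The proof splits into an easy direction and a hard direction.

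\emph{Easy direction.} Every sublist in $\calL_k$ satisfies the algorithm invariants. So if $L^*$ is $P$, then $S_{L^*}$ is a dominating set of size at most $k$, and $W^*\leq w(S_{L^*})\leq w'(L^*)$.

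\emph{Hard direction.} For the reverse inequality I will prove the claim announced in the overview, by induction on $t$: for every $p_i\in S$ and every rank-$t$ sublist $\calA[g,h]$ of $p_i$, the collection $\calL_t(i)$ contains a sublist $L$ with $\calA[g,h]\subseteq L$ and $w'(L)\leq w(S[g,h])$. Applying this with $t=k$ to a rank-$k$ sublist equal to $P$ yields $L\in\calL_k$ with $L=P$ and $w'(L)\leq W^*$. Hence $w'(L^*)\leq W^*$, and all inequalities become equalities.

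\emph{Base case $t=1$.} Here $S[g,h]=\{p_i\}$ and $\calA_{p_i}\subseteq\calA[g,h]$, so $\calA[g,h]$ consists only of sublists assigned to $p_i$. By condition (2) of Lemma~\ref{lem:linesep}, adjacent sublists go to different centers, so $\calA[g,h]$ is a single sublist containing $p_i$, and every point in it is dominated by $p_i$. Thus it lies inside $P(b_i^i,a_i^i)$, and we are done with weight $w_i$.

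\emph{Inductive step, open center.} Suppose $p_i$ is an open rank-$t$ center, say $\alpha_g\in\calA_{p_i}$ and $\alpha_h\notin\calA_{p_i}$, with the other orientation symmetric. I decompose the structure as follows.
\begin{itemize}
\item Let $p_j$ be the counterclockwise end of $\calA[g,h]$.
\item Let $\alpha_{h'}$ be the last sublist of $\calA_{p_i}$ inside $\calA[g,h]$.
\end{itemize}
The sublists strictly between $\alpha_{h'}$ and $\alpha_h$ belong to centers other than $p_i$. By line-separability (noncrossing diagonals, as in Lemma~\ref{lem:onesizegroup}), these centers' groups lie entirely within that stretch, so condition (4) holds there.

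I then split the region before $\alpha_{h'}$ at the last sublist of $p_i$ such that everything between it and $\alpha_{h'}$ is owned by centers nested inside. The goal is to realize $\calA[g,h]$ as the union of four pieces, matching the counterclockwise procedure:
\begin{itemize}
\item $P(b_i^i,a_i^i)$;
\item a rank-$t'$ sublist of $p_i$, giving $L_1$ (the main part, ending just before the next non-$p_i$ block);
\item a rank-$(t-t')$ sublist of some other center $p_q$, giving $L_2$ (the intermediate block);
\item a tail of points dominated by $p_i$, captured by $P(z_2,a_i^{z_2+1})$.
\end{itemize}
I will use $p_z$ to mark the end of $p_i$'s part. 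Induction gives $L_1\supseteq P[i,z]$ and $L_2\supseteq P[z_1+1,j]$ with weights bounded by the respective $w(S[\cdot])$. Since the centers in the two parts are disjoint, the weights add to at most $w(S[g,h])$. The minimum-value enclosing queries return sublists at least as good, and they extend at least as far. The tail then extends to cover $\alpha_h$, because $a_i^{z_2+1}$ is monotone in how far we have already covered.

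\emph{Inductive step, closed center.} Suppose $\alpha_g,\alpha_h\in\calA_{p_i}$. I split at $p_i$ itself into a counterclockwise part $P[i,x]$ and a clockwise part $P[y,i]$, each of which is an open (or rank-smaller) sublist for $p_i$. Their center sets share only $p_i$, so the ranks sum to $t+1$ and the weights sum minus $w_i$. This matches the bidirectional procedure exactly, including the subtraction of $w_i$.

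\emph{Main obstacle.} The hard part is the open-center decomposition. I must show that the blocks between consecutive sublists of $p_i$ nest so that one recursive call of rank $t-t'$ covers the intermediate block, which requires that a single other center acts as a rank-$(t-t')$ center there. I would try first to argue this via Lemma~\ref{lem:onesizegroup}-style noncrossing: take the outermost center in the block, and use rank-$t'$ recursion on $p_i$ for nested repetitions of the pattern. I must also check that the tail pieces are genuinely covered by $a_i$-indices. This needs care because points dominated by $p_i$ may be interrupted by the groups of nested centers, and I expect this is exactly where the induction on $L_1$ (itself produced by the counterclockwise procedure) absorbs the repetition.
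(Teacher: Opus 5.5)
Your outer frame (the easy inequality, the reduction to the inductive claim of Lemma~\ref{lem:correct10}, and the base case) matches the paper's. The inductive step, however, has a genuine gap. You split on open versus closed. The algorithm's procedures only certify coverage if you split on whether the \emph{main} sublist of $p_i$ is an end sublist of $\calA[g,h]$, and with your split two subcases fail.

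(i) Closed, with main sublist $\alpha_g$. Splitting at $p_i$ leaves a counterclockwise part that is all of $\calA[g,h]$ except a piece of $\alpha_g$. Contrary to your claim, that part is still closed and still of rank $t$. The clockwise part lies inside $\alpha_g$ and is not rank-$1$, since $\alpha_h\in\calA_{p_i}$ lies outside it. The bidirectional procedure only combines ranks $t'\in[2,t-1]$, and the induction hypothesis says nothing about rank $t$, so the argument is circular. The paper sends this case to the counterclockwise procedure instead. With $\alpha_f$ the last sublist of $p_i$ in $\calA[g,h-1]$, $L_1$ handles the closed rank-$t'$ sublist $\calA[g,f]$ and $L_2$ handles $\calA[f+1,h-1]$. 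The tail $P(z_2,a_i^{z_2+1})$ exists precisely to absorb $\alpha_h$, which $p_i$ dominates.

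(ii) Open, with $\alpha_g\in\calA_{p_i}$ secondary and the main sublist strictly inside. The counterclockwise procedure only forces $L_1\supseteq P[i,z]$, so the clockwise end of the minimum-value $L_1$ is unconstrained and need not reach back to $\alpha_g$. Then $\calA[g,h]\subseteq L$ fails. This case needs the bidirectional procedure, split at the main sublist as in Lemma~\ref{lem:decompbi}. There the queries on $P[y,i]$ and $P[i,x]$ fix both extents, and each half contains a sublist of another center, which puts both ranks in $[2,t-1]$.

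Two further issues. First, in your open case the tail cannot ``cover $\alpha_h$''. There $\alpha_h\notin\calA_{p_i}$, so $p_i$ need not dominate it, and $\alpha_h$ must come from $L_2$. Your ``main obstacle'' disappears once $f$ is the last sublist of $p_i$ in $\calA[g,h]$. Line-separability then makes $\calA[g,f]$ a single closed rank-$t'$ sublist of $p_i$, and makes the owner of $\alpha_h$ a rank-$(t-t')$ center of $\calA[f+1,h]$ (Lemma~\ref{lem:decomp}), so no nested re-splitting is needed.

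Second, you assert that the minimum-value queries ``extend at least as far.'' Containing the query range is automatic; what needs proof is that $L_1$ does not \emph{overshoot} past $p_j$. Here $p_j$ is the counterclockwise end of $\calA[f+1,h']$, with $h'=h$ (open) or $h'=h-1$ (closed). If $L_1$ overshoots, $P[z_1+1,j]$ wraps around the hull and is not contained in the inductive sublist $L_2'\supseteq\calA[f+1,h']$, so you get no bound on $w'(L_2)$. The paper rules this out by an exchange argument using the optimality of $S$ and positive weights. Suppose $P(b_i^i,a_i^i)\cup L_1\supseteq\calA[g,h']$. Replace $S[g,h']$ by $S_{L_1}$, which contains $p_i$, the only center of $S[g,h']$ whose group can leave $\calA[g,h']$. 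This gives a dominating set of size at most $k$ and weight at most $w(S)-w(S[f+1,h'])<w(S)$, a contradiction. This step is absent from your sketch.
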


According to our algorithm invariants, the sublist $L^*$, which is $P$, is dominated by $S_{L^*}$, with $|S_{L^*}| \leq k$ and $w(S_{L^*}) \leq w'(L^*)$. Therefore, we obtain $W^* \leq w(S_{L^*}) \leq w'(L^*)$. To prove the lemma, in the following we will prove that $w'(L^*) \leq W^*$. Note that our proof also argues that such a sublist $L^*\in \calL_k$ with $L^*=P$ must exist. 


Let $S$ be an optimal dominating set of size at most $k$, and let $\phi: \calA \rightarrow S$ be the line-separable assignment guaranteed by Lemma~\ref{lem:linesep}. Order sublists of $\calA=\langle\alpha_1,\alpha_2,\ldots,\alpha_m\rangle$ counterclockwise along $\calH(P)$ as discussed in Section~\ref{sec:overview} (we also follow the notation there, e.g., $\calA[g,h]$, $\calA_{p_i}$, $S[g,h]$). We will prove the following lemma. 

\begin{lemma}\label{lem:correct10}
For any point $p_i \in S$ and any $t$ with $1\leq t\leq k$, for any rank-$t$ sublist $\calA[g,h]$ of $p_i$, $\calL_t(i)$ must contain a sublist $L$ with $\calA[g,h]\subseteq L$ and $w'(L)\leq w(S[g,h])$.    
\end{lemma}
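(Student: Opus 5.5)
We argue by induction on $t$, matching each case of the algorithm to a structural decomposition of a rank-$t$ sublist. For $t=1$, a rank-$1$ sublist $\calA[g,h]$ of $p_i$ has $S[g,h]=\{p_i\}$, and by condition (4) $\calA_{p_i}\subseteq\calA[g,h]$. Hence $\calA[g,h]$ consists only of sublists of $\calA_{p_i}$. By condition (2) of Lemma~\ref{lem:linesep}, adjacent sublists go to different centers, so $\calA[g,h]$ is the single main sublist $\alpha_g$ of $p_i$. Every point of $\alpha_g$ intersects $D_{p_i}$, and $\alpha_g$ is contiguous and contains $p_i$, so $\alpha_g\subseteq P(b_i^i,a_i^i)=L$. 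Moreover $w'(L)=w_i=w(S[g,h])$, which settles the base case.

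For $t>1$ we fix a rank-$t$ sublist $\calA[g,h]$ of $p_i$ and assume the claim for all ranks $t'<t$. We may assume $|S[g,h]|=t$, since otherwise $p_i$ is a rank-$(t-1)$ center and we only need $\calL_{t-1}(i)$ to be, in effect, carried into $\calL_t(i)$. If the algorithm does not literally carry it over, we instead show that the same construction with $t$ replaced by $|S[g,h]|$ works; it is cleanest to prove the statement with $t=|S[g,h]|$ and note that the algorithm's construction also applies there.

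\textbf{Open case.} Say $\alpha_g\in\calA_{p_i}$ and $\alpha_h\notin\calA_{p_i}$. Let $\alpha_{h'}$ be the last sublist of $\calA_{p_i}$ in $\calA[g,h]$, and let $p_j$ be the endpoint of $\alpha_h$. Line-separability implies that the groups of the other centers in $\calA[g,h]$ nest between consecutive sublists of $\calA_{p_i}$ without interleaving.

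We split at the last gap. Let $\alpha_{h''}$ be the sublist of $\calA_{p_i}$ just before the final block $\calA[h''+1,h]$, which consists only of sublists of other centers. Take $p_z$ to be the last point of $\alpha_{h''}$ (or of the region immediately after it). Then $\calA[g,h'']$ minus its tail is a rank-$t'$ sublist of $p_i$ with $t'=|S[g,h'']|$, so induction provides $L_1$. The block $\calA[h''+1,\cdot]$ is covered by the groups of $t-t'$ other centers, each entirely inside it. By Lemma~\ref{lem:onesizegroup}-type reasoning, one of these centers is a rank-$(t-t')$ center of that block, so induction provides $L_2\in\calL_{t-t'}$. The final piece after $L_2$ is the remaining part of $\calA_{p_i}$, which is dominated by $p_i$ and lies within $P(z_2,a_i^{z_2+1})$. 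This requires that the trailing sublist of $\calA_{p_i}$ be contiguous with the end of $L_2$, so we will choose the split so that at most one sublist of $\calA_{p_i}$ follows the last foreign block.

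Because the query returns the minimum-weight enclosing sublists, and the enclosing sets are monotone in $z$ and $z_1$, the resulting $L$ contains $\calA[g,h]$. Its weight satisfies $w'(L)\le w(S[g,h''])+w(S[\text{block}])=w(S[g,h])$, since these two center sets are disjoint. The clockwise case is symmetric.

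\textbf{Closed case.} Here $\alpha_g,\alpha_h\in\calA_{p_i}$. We split at the main sublist of $p_i$: the part from the main sublist counterclockwise to $\alpha_h$ and the part clockwise to $\alpha_g$ are both open rank sublists of $p_i$, with ranks $t'$ and $t+1-t'$, since $p_i$ is counted twice. By induction they yield $L_x$ and $L_y$, and the bidirectional procedure combines them with weight $w'(L_x)+w'(L_y)-w_i\le w(S[g,h])$.

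The main obstacle is the open case. There we must verify that some choice of $z$ makes each piece an honest rank sublist: condition (4) must hold for the nested center, $L_1$'s endpoint must not overshoot in a way that breaks enclosure, and the final uncovered stretch must be dominated by $p_i$ alone, i.e. must lie before $a_i^{z_2+1}$. We would first try splitting at the last foreign block and then use line-separability to show that no sublist of $\calA_{p_i}$ separates it from $\alpha_h$ except a single terminal one.
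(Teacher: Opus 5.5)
Your base case is fine, but the inductive step has two genuine gaps.

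\textbf{The case split.} You split on open versus closed. The algorithm's procedures are instead keyed to whether the \emph{main} sublist of $p_i$ is an end sublist of $\calA[g,h]$, and your split fails in both directions.

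In your closed case, suppose the main sublist is $\alpha_g$ itself, so $\alpha_h$ is a secondary sublist of $p_i$. Splitting at the main sublist then returns $\calA[g,g]$ and $\calA[g,h]$ again. No $t'\in[2,t-1]$ exists, so there is no progress. (The halves are also closed, not open.)

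In your open case, suppose $\alpha_g$ is a secondary sublist of $p_i$ and the main sublist lies strictly inside. The counterclockwise procedure queries $P[i,z]$ from $p_i$, so nothing forces $L$ to cover the part of $\calA[g,h]$ clockwise of $p_i$.

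The correct division is as follows.
\begin{itemize}
\item If the main sublist is an end, say $\alpha_g$, let $\alpha_f$ be the last sublist of $\calA_{p_i}$ in $\calA[g,h']$, with $h'=h$ (open) or $h'=h-1$ (closed). Line-separability makes $p_i$ a rank-$t'$ center of $\calA[g,f]$. It also makes the owner of $\alpha_{h'}$ a rank-$(t-t')$ center of $\calA[f+1,h']$, which is direct and needs no Lemma~\ref{lem:onesizegroup}-type reasoning. In the closed case, the single trailing sublist $\alpha_h$ is exactly what $P(z_2,a_i^{z_2+1})$ absorbs. No special choice of split is needed; your worry about a ``trailing sublist'' belongs here, not in the open case.
\item If the main sublist $\alpha_f$ is not an end, split at $\alpha_f$ into $\calA[g,f]$ and $\calA[f,h]$, of ranks $t+1-t'$ and $t'$, both in $[2,t-1]$. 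Use the bidirectional procedure, whether $p_i$ is open or closed.
\end{itemize}

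\textbf{The overshoot of $L_1$.} You flag this obstacle but do not resolve it. $L_1$ is chosen by minimum $w'$, not by extent, so its counterclockwise endpoint $z_1$ may reach $p_j$ or beyond. Then $P[z_1+1,j]\not\subseteq\calA[f+1,h']$, the inductive $L_2'$ is no longer a candidate for the second query, and you have no bound on $w'(L_2)$. Monotonicity in $z$ and $z_1$ does not help.

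The missing idea is an exchange argument using the global optimality of $S$. Suppose $p_{z_1+1}\notin\calA[f+1,h']$. Then $\calA[g,h']\subseteq P(b_i^i,a_i^i)\cup L_1$. Consider $(S\setminus S[g,h'])\cup S_{L_1}$:
\begin{itemize}
\item It still dominates $P$, because condition~(4) confines the groups of $S[g,h']\setminus\{p_i\}$ to $\calA[g,h']$, and $p_i\in S_{L_1}$.
\item It has fewer than $|S|$ points, since $|S_{L_1}|\le t'<|S[g,h']|$.
\item Its weight is at most $w(S)-w(S[f+1,h'])<w(S)$, because $w(S_{L_1})\le w'(L_1)\le w'(L_1')\le w(S[g,f])$ and weights are positive.
\end{itemize}
This contradicts the optimality of $S$.

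\textbf{The reduction to $|S[g,h]|=t$.} Restricting to $|S[g,h]|=t$ is the right instinct, but not for your stated reason. The algorithm never carries $\calL_{t-1}(i)$ into $\calL_t(i)$. Moreover, every $w'(L)$ with $L\in\calL_t(i)$ is a sum of $t$ positive weights including $w_i$. So the claim genuinely fails when $|S[g,h]|<t$: for example, take $\calA[g,g]$ to be the main sublist of $p_i$ with $t=2$. The paper's decomposition also tacitly assumes $|S[g,h]|=t$, since it uses $|S[f+1,h']|=t-t'$. State and prove that exact-count version explicitly.
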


Lemma~\ref{lem:correct10} will lead to $w'(L^*) \leq W^*$ and thus prove Lemma~\ref{lem:correctweight}. Indeed, as discussed before, $p_i$ must have a rank-$k$ sublist $\calA[g,h]$ that is $P$ and $S[g,h]=S$. Since $\calL_k(i)$ contains a sublist $L$ with $\calA[g,h]\subseteq L$ and $w'(L)\leq w(S[g,h])$, we obtain that $L=P$ and $w'(L)\leq w(S)=W^*$. By definition, $L^*=P$ and $w'(L^*)\leq w'(L)\leq w(S)=W^*$. This proves Lemma~\ref{lem:correctweight}. 

In the rest of this section, we prove Lemma~\ref{lem:correct10}, by induction on $t=1,2,\ldots,k$. 
Recall that by Lemma~\ref{lem:onesizegroup} the base case starts from $t=1$ since $S$ has at least one point that has a rank-1 sublist in $\calA$. 

For the base case $t=1$, suppose that $p_i\in S$ has a rank-1 sublist $\calA[g,h]$. By definition, $\calA[g,h]$ is the only sublist in the group $\calA_{p_i}$, and it is the main sublist of $p_i$, meaning that $p_i\in \calA[g,h]$. Therefore, $w(S[g,h])=w_i$. Since $p_i\in \calA[g,h]$ and $P(b_i^i, a_i^i)$ is the maximal sublist of $P$ containing $p_i$ by the definitions of $a_i^i$ and $b_i^i$, we have $\calA[g,h]\subseteq P(b_i^i, a_i^i)$. According to our algorithm, $\calL_1(i)$ consists of a single sublist $L=P(b_i^i, a_i^i)$ with $S_L=\{p_i\}$ and $w'(S_L)=w_i$. Hence, Lemma~\ref{lem:correct10} holds for the base case $t=1$. 

We now assume that Lemma~\ref{lem:correct10} holds for every $t'\in [1,t-1]$, i.e., for every sublist $\calA[g',h']$ that is the rank-$t'$ sublist of some $p_i\in S$, $\calL_{t'}(i)$ contains a sublist $L'$ satisfying: $\calA[g',h'] \subseteq L'$ and $w'(L') \leq w(S[g',h'])$.
In the following, we prove that Lemma~\ref{lem:correct10} holds for $t$, i.e., for every sublist $\calA[g,h]$ that is the rank-$t$ sublist of some $p_i\in S$, $\calL_{t}(i)$ contains a sublist $L$ satisfying: $\calA[g,h] \subseteq L$ and $w'(L) \leq W(S[g,h])$. 

Suppose that $p_i$ is a rank-$t$ center of $\calA[g,h]$. 
By definition, at least one of the two end sublists of $\calA[g,h]$ is in the group $\calA_{p_i}$ and $\calA[g,h]$ contains the main sublist of $p_i$. Depending on whether the main sublist of $p_i$ is an end sublist of $\calA[g,h]$, there are two cases. 

\subsubsection{Case (1): The main sublist of $\boldsymbol{p_i}$ is an end sublist of $\boldsymbol{\calA[g,h]}$}

Without loss of generality, we assume that the clockwise end sublist of $\calA[g,h]$ is the main sublist of $p_i$, which is $\alpha_g$. We have the following lemma to ``decompose'' $\calA[g,h]$ into smaller sublists. See Figure~\ref{fig:sublist30}.

\begin{figure}[h]
\begin{minipage}[h]{\textwidth}
\begin{center}
\includegraphics[height=2.0in]{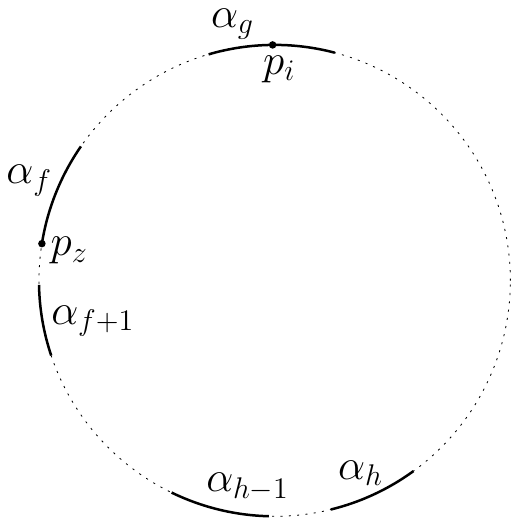}
\caption{\footnotesize Illustrating the sublists of $\calA$, represented by the black solid arcs (the dotted circle represents $\calH(P)$).}
\label{fig:sublist30}
\end{center}
\end{minipage}
\vspace{-0.1in}
\end{figure}

\begin{lemma}\label{lem:decomp}
\begin{description}
    \item[1. (The open case)] If $p_i$ is a rank-$t$ open center of $\calA[g,h]$, then there exist $t'\in [1,t-1]$, $f\in [g,h-1]$, and a point $p_{i'}\in S[g,h]$ such that $p_i$ is a rank-$t'$ center of $\calA[g,f]$ and $p_{i'}$ is a rank-$(t-t')$ center of $\calA[f+1,h]$. Furthermore, $w(S[g,h])=w(S[g,f])+w(S[f+1,h])$.
    \item[2. (The closed case)] If $p_i$ is a rank-$t$ closed center of $\calA[g,h]$, then there exist $t'\in [1,t-1]$, $f\in [g,h-1]$, and a point $p_{i'}\in S[g,h]$ such that $p_i$ is a rank-$t'$ center of $\calA[g,f]$ and $p_{i'}$ is a rank-$(t-t')$ center of $\calA[f+1,h-1]$. Furthermore, $w(S[g,h])=w(S[g,f])+w(S[f+1,h-1])$.
\end{description}
\end{lemma}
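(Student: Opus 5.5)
We are in Case (1): $\alpha_g$ is the main sublist of $p_i$. In the open case $\alpha_h\notin\calA_{p_i}$; in the closed case $\alpha_h\in\calA_{p_i}$. The plan is to find the last sublist of $\calA_{p_i}$ inside the relevant range, cut just after it, and check the four conditions of Definition~\ref{def:tcenter} on both pieces using line-separability.

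\emph{Open case.} Let $\alpha_{f'}$ be the counterclockwise-most sublist of $\calA_{p_i}$ in $\calA[g,h]$, so $g\le f'<h$.
\begin{itemize}
\item First try $f=f'$. Then every sublist of $\calA[f+1,h]$ belongs to points $p_j\ne p_i$. By condition (4) for $\calA[g,h]$, all groups of these points lie inside $\calA[g,h]$.
\item Such a group cannot meet both $\calA[g,f]$ and $\calA[f+1,h]$, since that would interleave with $\calA_{p_i}$. Indeed, $\alpha_g$ and $\alpha_f$ lie in $\calA_{p_i}$, and a group with sublists on both sides of $\alpha_f$ within $\calA[g,h]$ forces crossing diagonals, contradicting line-separability (Lemma~\ref{lem:linesep}).
\item Hence $S[g,f]$ and $S[f+1,h]$ are disjoint, so $w(S[g,h])=w(S[g,f])+w(S[f+1,h])$. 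Set $t'=|S[g,f]|$ and note $|S[f+1,h]|\le t-t'$.
\item Check the pieces. $p_i$ is a rank-$t'$ center of $\calA[g,f]$: condition (2) holds because $\alpha_g,\alpha_f\in\calA_{p_i}$; (3) holds because $\alpha_g$ is main; (4) holds because other groups there are contained in $\calA[g,h]$ and, by the no-straddling fact, in $\calA[g,f]$.
\item For $\calA[f+1,h]$, take $p_{i'}$ to be the owner of $\alpha_h$ (or of $\alpha_{f+1}$). Condition (4) holds since all groups there are fully contained, and (2) holds by the choice of $p_{i'}$.
\end{itemize}

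The obstacle is condition (3) for $p_{i'}$: its main sublist must lie in $\calA[f+1,h]$. Since $p_{i'}\ne p_i$ and its whole group lies in $\calA[f+1,h]$, this holds automatically. Also $t'\ge1$ because $p_i\in S[g,f]$, and $t'\le t-1$ because $S[f+1,h]$ is nonempty. If $t'=1$, condition (4) becomes $\calA_{p_i}\subseteq\calA[g,f]$. This holds because $S[g,f]=\{p_i\}$ means any sublist of $p_i$ outside $\calA[g,f]$ lies outside $\calA[g,h]$. Then $p_i$'s group would wrap around $\calA[f+1,h]$, which is allowed for $t>1$ but not for rank $1$. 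I will handle this by noting that $p_i$ is then still a rank-$t'$ center for the larger $t'$ used in the algorithm. Alternatively, I will choose $f$ differently, for example $f=g$, to guarantee the $t'=1$ requirement. This $t'=1$ corner case is the step I expect to need the most care.

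\emph{Closed case.} Here $\alpha_h\in\calA_{p_i}$. Apply the same argument to the range $\calA[g,h-1]$: let $\alpha_f$ be the last sublist of $\calA_{p_i}$ before $\alpha_h$, where $f<h-1$ because adjacent sublists have distinct owners (Lemma~\ref{lem:linesep}(2)). The middle range $\calA[f+1,h-1]$ is then enclosed by sublists of $\calA_{p_i}$ on both sides. By line-separability, every group meeting it is contained in it. The same disjointness yields $w(S[g,h])=w(S[g,f])+w(S[f+1,h-1])$, using $S[g,h]=S[g,f]\cup S[f+1,h-1]$ since $p_i\in S[g,f]$. The rank conditions follow exactly as in the open case, with $p_{i'}$ the owner of $\alpha_{h-1}$.
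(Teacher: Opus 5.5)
\textbf{Overall.} Apart from one corner, your argument is the paper's. You cut right after the last sublist $\alpha_f$ of $p_i$ in $\calA[g,h]$ (resp.\ $\calA[g,h-1]$), and use the interleaving obstruction with $\alpha_g,\alpha_f$ (resp.\ $\alpha_f,\alpha_h$) to show that no other group straddles the cut. From this you read off disjointness, the weight identity, and the rank conditions. Your bound $|S[f+1,h]|\le t-t'$ is actually more accurate than the paper's equality.

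\textbf{The gap at $t'=1$.} You flagged this case but did not close it, and the paper's proof also skips it: it simply asserts that $p_i$ is a rank-$t'$ center of $\calA[g,f]$. When $t'=1$, condition (1) gives $S[g,f]=\{p_i\}$. Since adjacent sublists have different owners, this forces $f=g$, and the $t=1$ clause of condition (4) then demands $\calA_{p_i}=\{\alpha_g\}$.
\begin{itemize}
\item In the closed case this is always false, since $\alpha_h\in\calA_{p_i}$.
\item In the open case it fails whenever $p_i$ has a secondary sublist outside $\calA[g,h]$, which rank $t\ge 2$ permits.
\end{itemize}
Your remark that such sublists lie outside $\calA[g,h]$ does not yield $\calA_{p_i}\subseteq\calA[g,f]$.

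Neither of your repairs works. Re-choosing $f$ cannot help, because $f=g$ is forced and the obstruction is $p_i$'s other sublists anyway. Viewing $\calA[g,g]$ as a rank-$2$ sublist is legitimate, since the $t>1$ clause is vacuous there. But it leaves a second piece with $|S[g,h]|-1$ centers, which exceeds $t-2$ when $|S[g,h]|=t$.

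In fact the statement is false in this corner. Suppose $\alpha_g,\alpha_{g+1},\alpha_{g+2}$ belong to $p_i,p_j,p_i$ with $\calA_{p_j}=\{\alpha_{g+1}\}$, as in the alternating configuration of Figure~\ref{fig:OK}. Then $p_i$ is a rank-$2$ closed center of $\calA[g,g+2]$, yet no $f\in[g,g+1]$ makes $p_i$ a rank-$1$ center of $\calA[g,f]$.

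\textbf{The fix.} The repair belongs in the definition, not the proof: drop the $t=1$ clause of condition (4), so that a rank-$1$ sublist of $p_i$ is just its main sublist. Nothing downstream needs that clause.
\begin{itemize}
\item The base case of Lemma~\ref{lem:correct10} only uses that $\alpha_g$ contains $p_i$ and is dominated by $p_i$. Hence $\alpha_g\subseteq P(b_i^i,a_i^i)\in\calL_1(i)$ with $w'=w_i=w(S[g,g])$.
\item On the second piece, $p_{i'}\ne p_i$ has its whole group inside $\calA[f+1,h']$, so even the strict clause holds there when $t-t'=1$.
\end{itemize}
With this change your proof goes through as written.
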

\begin{proof}
We first discuss the open case. In this case, the sublist $\alpha_{h}$ must be in $\calA_{p_i'}$ for another point $p_{i'}\in S[g,h]\setminus\{p_i\}$. Let $\alpha_{f}$ be the farthest counterclockwise sublist of $\calA[g,h]$ assigned to $p_i$. Let $t'=|S[g,f]|$. 
Since $\alpha_g$ is a sublist of $p_i$ (i.e., $\alpha_g\in \calA_{p_i}$), by the line-separable property, any point of $S[f+1,h]$ cannot have a sublist in $\calA[g,f]$. This implies $1\leq t'\leq t-1$ and $|S[f+1,h]|=t-t'$, 

Consider a point $p\in S[f+1,h]$. By definition, all sublists of $\calA_p$ are in $\calA[g,h]$. Since $\calA_p$ cannot have a sublist in $\calA[g,f]$, all sublists of $\calA_p$ are in $\calA[f+1,h]$. Since $\alpha_h\in \calA_{p_i'}$, $p_{i'}$ is a rank-($t-t'$) center of $\calA[f+1,h]$ by definition. 

On the other hand, since both end sublists of $\calA[g,f]$ are from $\calA_{p_i}$, again due to the line-separable property, none of the points of $S[g,f]\setminus\{p_i\}$ can have a sublist outside $\calA[g,f]$. As $|S[g,f]|=t'$, by definition, $p_{i}$ is a rank-($t'$) (closed) center of $\calA[g,f]$. 

The above discussion also implies that $S[g,f]$ and $S[f+1,h]$ form a partition of $S[g,h]$. Therefore, $w(S[g,h])=w(S[g,f])+w(S[f+1,h-1])$.

We now discuss the closed case. In this case, $\alpha_h$ is also assigned to $p_i$, i.e., $\alpha_h\in \calA_{p_i}$. Let $p_{i'}$ be the point such that $\alpha_{h-1}\in \calA_{p_i'}$.
We define $\alpha_{f}$ to be the farthest counterclockwise sublist in $\calA[g,h-1]$ assigned to $p_i$. 
Let $t'=|S[g,f]|$. As both $\alpha_g$ and $\alpha_f$ are in $\calA_{p_i}$, by the line-separable property, none of the points of $S[g,f]\setminus\{p_i\}$ can have a sublist outside $\calA[g,f]$. Therefore, $p_i$ is a rank-$t'$ (closed) center of $\calA[g,f]$. Similarly, since both $\alpha_f$ and $\alpha_h$ are in $\calA_{p_i}$, none of the points of $S[f+1,h-1]$ can have a sublist outside $\calA[f+1,h-1]$. By the definition of $f$, no sublist of $S[f+1,h-1]$ is in $\calA_{p_i}$. Therefore, $|S[f+1,h-1]|=t-t'$ and $p_{i'}$ is a rank-$(t-t')$ center of $\calA[f+1,h-1]$. The above discussion also implies that $S[g,f]$ and $S[f+1,h-1]$ form a partition of $S[g,h]$. Therefore, $w(S[g,h])=w(s[g,f])+w(s[f+1,h-1])$.
\end{proof}

Define $h'$ to be $h$ if $p_i$ is an open center, and to be $h-1$ if $p_i$ is a closed center. In either case, $p_{i'}$ is a rank-$(t-t')$ center of $\calA[f+1,h']$. Let $p_j$ be the counterclockwise endpoint of $\calA[f+1,h']$. 
Let $p_z$ be the counterclockwise endpoint of $\alpha_f$ in Lemma~\ref{lem:decomp} (see Figure~\ref{fig:sublist30}). 


Now, consider the $t$-th iteration of our algorithm when $p_i$ is processed during the \textit{counterclockwise processing procedure}, and the triple $(p_j, p_z, t')$ is examined. During this step, the algorithm constructs a sublist $L$ for $\calL_t(i)$ as the union of four sublists (see Figure~\ref{fig:sublist10}): $P(b_i^i,a_i^i)$, the minimum-value sublist $L_1$ of $\calL_{t'}(i)$ containing $P[i,z]$, the minimum-value sublist
$L_2$ of $\calL_{t - t'}$ containing $P[z_1+1, j]$, and $P(z_2,a_i^{z_2+1})$, where $p_{z_1}$ and $p_{z_2}$ are the counterclockwise endpoints of $L_1$ and $L_2$, respectively. Also, $S_L = S_{L_1} \cup S_{L_2}$ and $w'(L) = w'(L_1) + w'(L_2)$. The following lemma completes our proof for Case (1). 

\begin{lemma}\label{lem:correct20}
$\calA[g,h]\subseteq L$ and $w'(L)\leq w(S[g,h])$. 
\end{lemma}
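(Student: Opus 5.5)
We apply the induction hypothesis to the two pieces given by Lemma~\ref{lem:decomp}, and then check that the four pieces glued together by the counterclockwise processing procedure cover $\calA[g,h]$ at total cost at most $w(S[g,h])$.

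\textbf{Step 1: the first piece.} By Lemma~\ref{lem:decomp}, $p_i$ is a rank-$t'$ center of $\calA[g,f]$ with $t'\le t-1$. The induction hypothesis then gives a sublist $L'\in\calL_{t'}(i)$ with $\calA[g,f]\subseteq L'$ and $w'(L')\le w(S[g,f])$. Since $p_i$ lies in the main sublist $\alpha_g$ and $p_z$ is the counterclockwise endpoint of $\alpha_f$, we have $P[i,z]\subseteq\calA[g,f]\subseteq L'$. Hence $L'$ is a candidate for the minimum-value enclosing sublist query, so $w'(L_1)\le w'(L')\le w(S[g,f])$ and $P[i,z]\subseteq L_1$.

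\textbf{Step 2: the second piece.} By Lemma~\ref{lem:decomp}, $p_{i'}$ is a rank-$(t-t')$ center of $\calA[f+1,h']$. The induction hypothesis gives $L''\in\calL_{t-t'}(i')\subseteq\calL_{t-t'}$ with $\calA[f+1,h']\subseteq L''$ and $w'(L'')\le w(S[f+1,h'])$. The query for $L_2$ asks for a sublist containing $P[z_1+1,j]$. If $z_1$ lies at or beyond $j$ counterclockwise, the query range is degenerate; I will treat that case separately, since then $L_1$ alone already covers up to $p_j$ and we may take $L_2$ trivially. Otherwise $P[z_1+1,j]\subseteq P[z+1,j]=\calA[f+1,h']\subseteq L''$, so $w'(L_2)\le w(S[f+1,h'])$. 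Summing and using the weight identity of Lemma~\ref{lem:decomp} gives $w'(L)=w'(L_1)+w'(L_2)\le w(S[g,h])$.

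\textbf{Step 3: coverage.} We need $\calA[g,h]\subseteq L$. The part from $p_i$ counterclockwise to $p_j$ is covered by $L_1\cup L_2$, by the enclosing conditions. Two portions remain.

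\emph{The part of $\alpha_g$ clockwise of $p_i$.} Every point of $\alpha_g$ is dominated by $p_i$, and $\alpha_g$ is contiguous and contains $p_i$. So $\alpha_g\subseteq P(b_i^i,a_i^i)$, which is part of $L$.

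\emph{In the closed case, the sublist $\alpha_h$ beyond $p_j$.} Here $\alpha_h$ starts at $p_{j+1}$ and is dominated by $p_i$. Since $L_2\supseteq P[z_1+1,j]$, its endpoint satisfies $z_2\ge j$. If $z_2$ is already past $\alpha_h$, we are done. Otherwise $p_{z_2+1}\in\alpha_h$ is dominated by $p_i$, so the maximal dominated run $P(z_2,a_i^{z_2+1})$ starting at $p_{z_2+1}$ extends through the rest of the contiguous dominated block containing $\alpha_h$. This covers $\alpha_h$. In the open case $h'=h$, so nothing beyond $p_j$ is needed.

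\textbf{Main obstacle.} The subtle step is making sure $L_1$ and $L_2$ (and the final run) really chain together without gaps and without wraparound pathologies, for example when $L_1$ or $L_2$ overshoots. I would handle this by arguing in the cyclic order starting at $b_i^i$, using that all pieces are consecutive by construction, so their union is a single sublist containing both $P[i,j]$ and $\alpha_g$, together with the $\alpha_h$ argument above.
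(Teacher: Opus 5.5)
Your Steps 1 and 3 agree with the paper, but Step 2 has a genuine gap in exactly the case you set aside. If the counterclockwise endpoint $p_{z_1}$ of $L_1$ lies at or beyond $p_j$, you say you ``may take $L_2$ trivially''. The lemma, however, is about the specific sublist $L$ that the algorithm builds for the triple $(p_j,p_z,t')$, and the algorithm has no such freedom. It queries $\calL_{t-t'}$ for a minimum-value sublist containing $P[z_1+1,j]$. Under the paper's cyclic convention this is not empty; it is a sublist wrapping almost all the way around $\calH(P)$. The induction hypothesis only gives you $L''\supseteq\calA[f+1,h']=P[z+1,j]$, which does not contain this wraparound sublist. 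So nothing bounds $w'(L_2)$, and the query might return nothing at all. Your closing remark about consecutiveness in the cyclic order concerns coverage; it does not repair this weight bound.

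The missing idea is that this case never occurs, and this is where the optimality of $S$ and the positivity of the weights are used. The paper proves $p_{z_1+1}\in\calA[f+1,h']$ by an exchange argument:
\begin{itemize}
\item Otherwise $L_1\supseteq P[i,j]$, so $\calA[g,h']\subseteq P(b_i^i,a_i^i)\cup L_1$ is dominated by $S_{L_1}$ (recall $p_i\in S_{L_1}$).
\item Then $S'=(S\setminus S[g,h'])\cup S_{L_1}$ is still a dominating set. Every point of $S[g,h']\setminus\{p_i\}$ has its entire group inside $\calA[g,h']$ (by the rank-$t$ definition, together with $\alpha_h\in\calA_{p_i}$ in the closed case), and $p_i$ remains in $S'$.
\item Its size satisfies $|S'|\le|S|-|S[g,h']|+t'<|S|\le k$.
\item Its weight satisfies $w(S')\le w(S)-w(S[g,h'])+w'(L_1)\le w(S)-w(S[f+1,h'])<w(S)$. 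This uses $w'(L_1)\le w(S[g,f])$, the partition identity from Lemma~\ref{lem:decomp}, and the fact that $p_{i'}\in S[f+1,h']$ has positive weight.
\end{itemize}
This contradicts the optimality of $S$. Hence $p_{z_1}\in P[z,j)$, so $P[z_1+1,j]$ is a genuine subrange of $\calA[f+1,h']$, and the rest of your argument goes through as written.
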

\begin{proof}   
By the induction hypothesis, $\calL_{t'}(i)$ contains a sublist $L_1'$ such that $\calA[g,f]\subseteq L_1'$ and $w'(L_1')\leq w(S[g,f])$; similarly, $\calL_{t-t'}(i')$ contains a sublist $L_2'$ such that $\calA[f+1,h']\subseteq L_2'$ and $w'(L_2)\leq w(S[f+1,h'])$. See Figure~\ref{fig:sublist40}. 

\begin{figure}[t]
\begin{minipage}[h]{\textwidth}
\begin{center}
\includegraphics[height=2.9in]{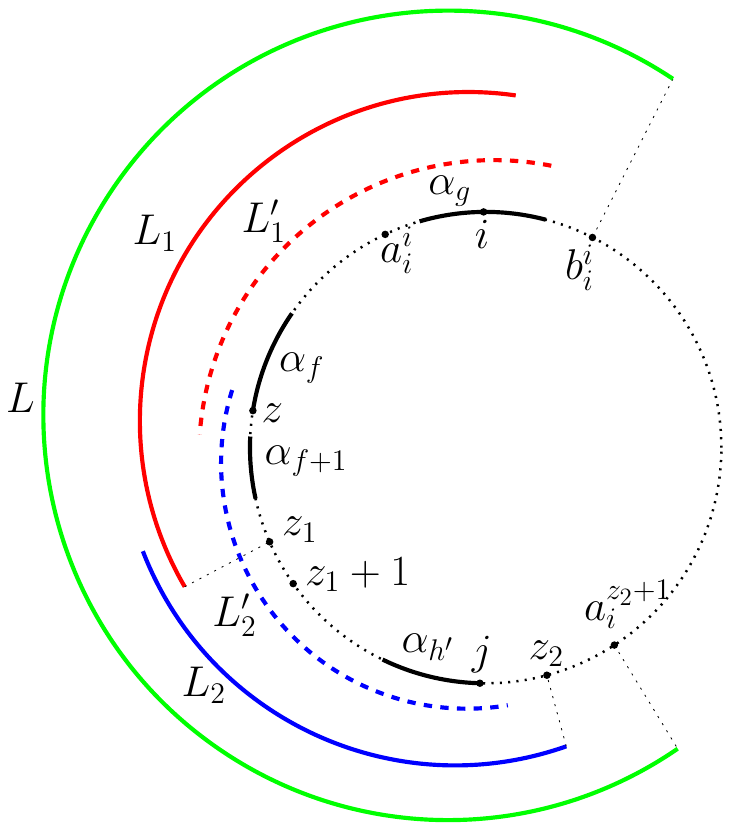}
\caption{\footnotesize Illustrating the proof of Lemma~\ref{lem:correct20}. For points of $P$, only their indices are shown. The red (resp., blue) dashed arc represents $L_1'$ (resp., $L_2'$). The red (resp., blue, green) solid arc represents $L_1$ (resp., $L_2$, $L$). The black solid arcs on the dotted circle represents sublists of $\calA$.}
\label{fig:sublist40}
\end{center}
\end{minipage}
\vspace{-0.1in}
\end{figure}

Since $P[i,z]\subseteq L_1$ and $p_i\in \alpha_g\subseteq P(b_i^i,a_i^i)$, we have $P[i,z]\subseteq\calA[g,f]\subseteq P(b_i^i,a_i^i)\cup L_1$. Since $\calA[g,f]\subseteq L_1'$, $P[i,z]\subseteq L_1'$. As $L_1$ is the minimum-value sublist of $\calL_{t'}(i)$ containing $P[i,z]$ and $L_1'$ is a sublist of $\calL_{t'}(i)$ containing $P[i,z]$, we obtain $w'(L_1)\leq w'(L_1')$. Since $w'(L_1')\leq w(S[g,f])$, it follows that
\begin{equation}\label{equ:10}
w'(L_1)\leq w(S[g,f]).    
\end{equation}


We claim that $p_{z_1+1}$ must be in a sublist of $\calA[f+1,h']$. Assume to the contrary that this is not true. Then, since $\calA[g,f]$ and $\calA[f+1,h']$ are consecutive and $\calA[g,f]\subseteq P(b_i^i,a_i^i)\cup L_1$, we have $\calA[g,h']=\calA[g,f]\cup \calA[f+1,h'] \subseteq P(b_i^i,a_i^i)\cup L_1$. 
By definition, $|S_{L_1}|\leq t'=|S[g,f]|<t$. 
As $p_i$ is in $S_{L_1}$ and $p_i$ dominates $P(b_i^i,a_i^i)$, if we replace points of $S[g,h']$ in $S$ by points of $S_{L_1}$, we can still obtain a new dominating set $S'$ for $P$ of size at most $k$. But since $w(S_{L_1})\leq w'(L_1)\leq w(S[g,f])$, the new dominating set $S'$ has weight $w(S')=w(S\setminus S[g,h']) +w(S_{L_1})<w(S\setminus S[g,h']) + w(S[g,f])+w(S[f+1,h']) = w(S)$, a contradiction to the optimality of $S$. Therefore, the above claim holds.

Due to the above claim, we have $P[z_1+1,j]\subseteq \calA[f+1,h']$. Since $\calA[f+1,h']\subseteq L_2'$, $P[z_1+1,j]\subseteq L_2'$.
Because $L_2$ is the minimum-value sublist of $\calL_{t - t'}$ containing $P[z_1+1, j]$ and $L_2'\in \calL_{t - t'}(i')\subseteq \calL_{t-t'}$, $w'(L_2)\leq w'(L_2')$ holds. As $w'(L_2')\leq w(S[f+1,h'])$, it follows that 
\begin{equation}\label{equ:20}
w'(L_2)\leq w(S[f+1,h']).    
\end{equation}

Recall that $L$ is the union of the following four sublists: $P(b_i^i,a_i^i)$, $L_1$, $L_2$, and $P(z_2,a_i^{z_2+1})$. Depending on whether $p_i$ is an open or closed rank-$t$ center of $\calA[g,h]$, there are two cases. 

\begin{itemize}
\item 
If $p_i$ is an open center, then $h'=h$ and thus $\calA[g,h]=\calA[g,f]\cup \calA[f+1,h']$. We showed above that $\calA[g,f]\subseteq P(b_i^i,a_i^i)\cup L_1$. As $\calA[f+1,h']$'s clockwise endpoint is $p_{z+1}$ and $p_z\in L_1$, we obtain $\calA[f+1,h']\subseteq L_1\cup P[z_1+1,j]$. Consequently, $\calA[g,h]\subseteq P(b_i^i,a_i^i)\cup L_1\cup P[z_1+1,j]$. As $P[z_1+1,j]\subseteq L_2$, it follows that $\calA[g,h]\subseteq P(b_i^i,a_i^i)\cup L_1\cup L_2\subseteq L$. 

By Lemma~\ref{lem:decomp}, $w(S[g,h])=w(S[g,f])+w(S[f+1,h'])$. As $w'(L)= w'(L_1) + w'(L_2)$, by Inequalities~\eqref{equ:10} and \eqref{equ:20}, we can derive $w'(L)\leq w(S[g,h])$.

\item 
If $p_i$ is a closed center, then $h'=h-1$ and $\calA[g,h]=\calA[g,f]\cup \calA[f+1,h']\cup \alpha_{h}$. 
As in the above open case, $\calA[g,f]\cup \calA[f+1,h']\subseteq P(b_i^i,a_i^i)\cup L_1\cup L_2\subseteq L$.

If $\calA[g,h]\subseteq P(b_i^i, a_i^i)\cup L_1\cup L_2$, then we still have $\calA[g,h]\subseteq L$. Otherwise, it must be the case that $p_{z_2+1}\in \alpha_{h}$. Consequently, since $p_i$ dominates $\alpha_{h}$ and $p_{j}$ is the counterclockwise endpoint of $\calA[f+1,h']$, by the definition of $a_i^{z_2+1}$, $\alpha_{h}\subseteq P[j+1,a_i^{z_2+1})$. As $\calA[g,f]\cup \calA[f+1,h']\subseteq P(b_i^i,a_i^i)\cup L_1\cup L_2$, we obtain $\calA[g,h]=\calA[g,f]\cup \calA[f+1,h']\cup \alpha_{h}\subseteq P(b_i^i,a_i^i)\cup L_1\cup L_2\cup\alpha_{h}\subseteq P(b_i^i,a_i^i)\cup L_1\cup L_2\cup P[j+1,a_i^{z_2+1})$. Since $p_j\in L_2$ by definition and $z_2$ is the counterclockwise endpoint of $L_2$, we have $L_2\cup P[j+1,a_i^{z_2+1})
\subseteq L_2\cup P(z_2,a_i^{z_2+1})$. Hence, 
$P(b_i^i,a_i^i)\cup L_1\cup L_2\cup P[j+1,a_i^{z_2+1})\subseteq P(b_i^i,a_i^i)\cup L_1\cup L_2\cup P(z_2,a_i^{z_2+1})=L$. Therefore, we obtain $\calA[g,h]\subseteq L$.

By Lemma~\ref{lem:decomp}, $w(S[g,h])=w(S[g,f])+w(S[f+1,h'])$.
As in the above case, since $w'(L)= w'(L_1) + w'(L_2)$, by Inequalities~\eqref{equ:10} and \eqref{equ:20}, we can finally derive $w'(L)\leq w(S[g,h])$. 
\end{itemize}

In summary, we have $\calA[g,h]\subseteq L$ and $w'(L)\leq w(S[g,h])$ in both cases. The lemma thus follows. 
\end{proof}

\subsubsection{Case (2): The main sublist of $\boldsymbol{p_i}$ is not an end sublist of $\boldsymbol{\calA[g,h]}$}

As $p_i$ is a rank-$t$ center of $\calA[g,h]$, one of the two end sublists of $\calA[g,h]$ must be in the group $\calA_{p_i}$. Without loss of generality, we assume that the counterclockwise end sublist of $\calA[g,h]$, i.e, $\alpha_h$, is in $\calA_{p_i}$. Note that $\alpha_h$ must be a secondary sublist of $p_i$. Let $\alpha_f$ be the main sublist of $p_i$. By definition, $\alpha_f\in \calA[g,h]$.
We have the following lemma. See Figure~\ref{fig:sublist50}.

\begin{lemma}\label{lem:decompbi}
There exists $t'\in [2,t-1]$ such that $p_i$ is a rank-$t'$ center of $\calA[f,h]$ and is a rank-$(t-t'+1)$ center of $\calA[g,f]$. Furthermore,  $w(S[g,h])=w(S[f,h]) + w(S[g,f]) - w_i$.
\end{lemma}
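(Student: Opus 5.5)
We split $\calA[g,h]$ at the main sublist $\alpha_f$ of $p_i$ into the two overlapping pieces $\calA[g,f]$ and $\calA[f,h]$, which share exactly $\alpha_f$. The whole argument rests on the line-separable property of Lemma~\ref{lem:linesep}, used as in the proof of Lemma~\ref{lem:decomp}: whenever two sublists of $\calA_{p_i}$ bound an arc $\calA[x,y]$, no other center of $S$ can have sublists on both sides of that arc. We then check the four conditions of Definition~\ref{def:tcenter} for each piece and count centers.

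First consider $\calA[f,h]$. Both end sublists $\alpha_f$ and $\alpha_h$ lie in $\calA_{p_i}$. Take any $p\in S[f,h]\setminus\{p_i\}$. If $\calA_p$ had a sublist outside $\calA[f,h]$, then a segment from $p_i$ (which lies in $\alpha_f$) to a point of $\alpha_h$ would cross a segment joining two points of $\calA_p$ on opposite sides. This contradicts line-separability, exactly as in Lemma~\ref{lem:decomp}. Hence $\calA_p\subseteq\calA[f,h]$. Condition (3) holds because $\alpha_f$ is the main sublist, and condition (2) holds because $\alpha_f,\alpha_h\in\calA_{p_i}$. So $p_i$ is a rank-$t'$ center of $\calA[f,h]$ with $t'=|S[f,h]|$.

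Now consider $\calA[g,f]$. Its counterclockwise end $\alpha_f$ is in $\calA_{p_i}$ and contains the main sublist. For a point $p\in S[g,f]\setminus\{p_i\}$, condition (4) for $\calA[g,h]$ gives $\calA_p\subseteq\calA[g,h]$. By the previous paragraph, $\calA_p$ cannot also meet $\calA(f,h]$, since such points lie entirely inside $\calA[f,h]$ and $p$ already has a sublist in $\calA[g,f)$. Therefore $\calA_p\subseteq\calA[g,f]$, and $p_i$ is a center of $\calA[g,f]$ of rank $|S[g,f]|$. The same observation shows $S[g,f]\cap S[f,h]=\{p_i\}$ and $S[g,f]\cup S[f,h]=S[g,h]$. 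This gives $|S[g,f]|+|S[f,h]|=|S[g,h]|+1\le t+1$ and the weight identity $w(S[g,h])=w(S[f,h])+w(S[g,f])-w_i$.

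The main obstacle is the range $t'\in[2,t-1]$ together with condition (1), because the ranks in the statement are upper bounds while the cardinalities may be smaller. We set $t'=\max(2,|S[f,h]|)$ and use that the rank condition $|S[\cdot]|\le t$ is monotone in $t$.

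To show $|S[f,h]|\ge 2$: since $\alpha_f$ and $\alpha_h$ are distinct sublists of the same group, Lemma~\ref{lem:linesep}(2) forbids them from being adjacent, so some sublist between them belongs to another center. Similarly $|S[g,f]|\ge 2$, because $\alpha_g\notin\calA_{p_i}$ as $\alpha_g$ is not the main sublist. Then $|S[f,h]|\le t+1-|S[g,f]|\le t-1$, so $t'\in[2,t-1]$ and $t-t'+1\ge|S[g,f]|$ (when $t'=2$ this uses $|S[f,h]|\le 2$).

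One subtlety remains for $\calA[g,f]$ when its rank is $1$: Definition~\ref{def:tcenter} then requires $\calA_{p_i}\subseteq\calA[g,f]$. This is avoided because $t-t'+1\ge 2$ whenever $t'\le t-1$, so the rank-$(t-t'+1)$ condition falls under the $t>1$ clause.
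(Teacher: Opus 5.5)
Your proof is correct and follows the paper's argument: split at $\alpha_f$, use line-separability to confine every other center of $S[f,h]$ to $\calA[f,h]$ and every other center of $S[g,f]$ to $\calA[g,f]$, then count with $p_i$ shared and subtract $w_i$ once. It is also more careful than the paper, which tacitly takes $|S[g,h]|=t$ and never checks $t'\le t-1$. One small slip: $\alpha_g$ not being the main sublist does not imply $\alpha_g\notin\calA_{p_i}$, since it could be a secondary sublist of $p_i$. Still, $|S[g,f]|\ge 2$ follows from your adjacency argument: $g\neq f$ puts $\alpha_{f-1}\in\calA[g,f]$, and $\alpha_{f-1}$ belongs to a center other than $p_i$.
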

\begin{proof} 
Define $t'=|S[f,h]|$. As both $\alpha_{h}$ and $\alpha_{f}$ are sublists of $\calA_{p_i}$, $S[f,h]\setminus\{p_i\}$ is not empty and thus $t'\geq 2$, and furthermore, 
due to the line-separable property, no center of $S[f,h]\setminus\{p_i\}$ can have a sublist outside $\calA[f,h]$. Consequently, $p_i$ is a rank-$t'$ center of $\calA[f,h]$.

Consider any center $p_{i'}\in S[g,f]\setminus\{p_i\}$. We claim that all sublists of $\calA_{p_{i'}}$ are in $\calA[g,f]$. Indeed, since $p_i$ is a rank-$t$ center in $\calA[g,h]$ and $p_{i'}\in S[g,h]$, all sublists of $\calA_{p_{i'}}$ are in $\calA[g,h]$. As both $\alpha_{h}$ and $\alpha_{f}$ are sublists of $p_i$, due to the line-separable property, $\calA_{p_{i'}}$ cannot have a sublist in $\calA[f,h]$. Therefore, all sublists of $\calA_{p_{i'}}$ must be in $\calA[g,f]$. 

In light of the above claim, since $p_i$ is both in $S[g,f]$ and $S[f,h]$, we have $|S[g,f]|=t+1-|S[f,h]|=t+1-t'$. Since $\alpha_f$ is an end sublist of $\calA[g,f]$, $p_i$ is a rank-$(t+1-t')$ center of $\calA[g,f]$. 

The above discussion also implies that $S[f,h]\setminus\{p_i\}$ and $S[g,f]\setminus\{p_i\}$ form a partition of $S[g,h]\setminus\{p_i\}$. As $p_i$ is in $S[f,h]$, $S[g,f]$, and $S[g,h]$, we have $w(S[g,h])=w(S[f,h]) + w(S[g,f]) - w_i$.
\end{proof}

\begin{figure}[t]
\begin{minipage}[h]{\textwidth}
\begin{center}
\includegraphics[height=2.0in]{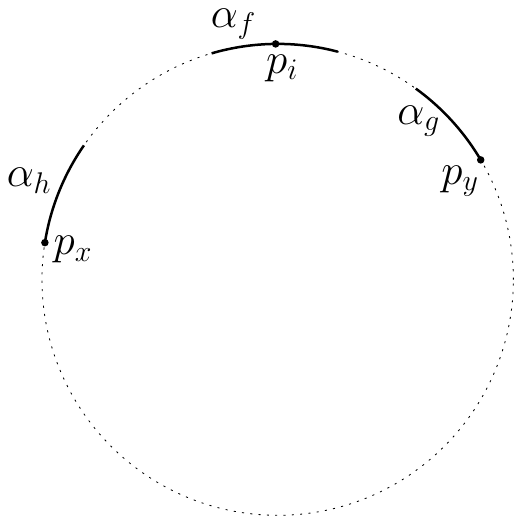}
\caption{\footnotesize Illustrating the sublists of $\calA$, represented by the black solid arcs (the dotted circle represents $\calH(P)$).}
\label{fig:sublist50}
\end{center}
\end{minipage}
\vspace{-0.1in}
\end{figure}



Let $t'$ be defined by Lemma~\ref{lem:decompbi}. 
Let $p_x$ and $p_y$ denote the counterclockwise and clockwise endpoints of $\calA[g,h]$, respectively; see Figure~\ref{fig:sublist50}.

Now consider the $t$-th iteration of the algorithm in which $p_i$ is processed via the 
\textit{bidirectional processing procedure}, and the triple 
$(p_x, p_{y}, t')$ is considered. 
In this step, the algorithm constructs a sublist $L \in \calL_t(i)$ as the union of the three sublists: $P(b_i^i, a_i^i)$,  the minimum-value enclosing sublist $L_x$ in $\calL_{t'}(i)$ that contains $P[i,x]$, and the minimum-value enclosing sublist 
$L_y$ in $\calL_{t+1-t'}(i)$ that contains $P[y,i]$. Also, $S_L = S_{L_x} \cup S_{L_y}$ and 
$w'(L) = w'(L_x) + w'(L_y) - w_i$. The following lemma completes our proof for Case (2). 

\begin{lemma}\label{lem:correct30}
$\calA[g,h]\subseteq L$ and $w'(L)\leq w(S[g,h])$. 
\end{lemma}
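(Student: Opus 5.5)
We follow the pattern of Lemma~\ref{lem:correct20}, now using Lemma~\ref{lem:decompbi} in place of Lemma~\ref{lem:decomp}.

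\emph{Step 1: candidate sublists from the induction hypothesis.} By Lemma~\ref{lem:decompbi}, $p_i$ is a rank-$t'$ center of $\calA[f,h]$ and a rank-$(t+1-t')$ center of $\calA[g,f]$. Both ranks lie in $[2,t-1]$: $t'\ge 2$ by the lemma, and $t+1-t'\le t-1$. The induction hypothesis therefore applies to both. It gives a sublist $L_x'\in\calL_{t'}(i)$ with $\calA[f,h]\subseteq L_x'$ and $w'(L_x')\le w(S[f,h])$. It also gives a sublist $L_y'\in\calL_{t+1-t'}(i)$ with $\calA[g,f]\subseteq L_y'$ and $w'(L_y')\le w(S[g,f])$.

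\emph{Step 2: comparing with the algorithm's choices.} Since $p_i\in\alpha_f$, the sublist $P[i,x]$ is contained in $\calA[f,h]\subseteq L_x'$. This uses that $p_x$ is the counterclockwise endpoint of $\alpha_h$ and that $p_i$ lies in $\alpha_f$, which comes before $\alpha_h$ counterclockwise within $\calA[g,h]$. Symmetrically, $P[y,i]\subseteq\calA[g,f]\subseteq L_y'$. So $L_x'$ and $L_y'$ are feasible answers to the two minimum-value enclosing sublist queries. Hence $w'(L_x)\le w'(L_x')$ and $w'(L_y)\le w'(L_y')$. Combining these with the weight identity of Lemma~\ref{lem:decompbi} yields
\[
w'(L)=w'(L_x)+w'(L_y)-w_i\le w(S[f,h])+w(S[g,f])-w_i=w(S[g,h]).
\]

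\emph{Step 3: containment.} $L_x$ contains $P[i,x]$ and $L_y$ contains $P[y,i]$. Their union, together with $P(b_i^i,a_i^i)$, therefore covers $P[y,x]$. By the choice of $p_x$ and $p_y$, $P[y,x]$ is exactly $\calA[g,h]$, so $\calA[g,h]\subseteq L$.

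\emph{Main obstacle.} The delicate point is Step 3's implicit claim that the union is a genuine sublist and not the whole circle with wraparound issues. It is also possible that $L_x$ or $L_y$ wraps past $p_i$. Because $P[y,x]$ is a contiguous sublist containing $p_i$, and both query results contain $p_i$ together with the respective halves, their union contains $P[y,x]$ regardless. If the union wraps around, it simply equals $P$, which still contains $\calA[g,h]$. I would treat this as a short remark, since the algorithm description already asserts that the union forms a sublist.
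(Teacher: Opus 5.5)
Your proposal is correct and follows essentially the same argument as the paper: apply the induction hypothesis to the rank-$t'$ sublist $\calA[f,h]$ and the rank-$(t+1-t')$ sublist $\calA[g,f]$ of $p_i$, note that the resulting $L_x'$ and $L_y'$ are feasible answers to the two minimum-value enclosing sublist queries, and combine this with the weight identity of Lemma~\ref{lem:decompbi}. The only cosmetic difference is how you get containment. You use $P[y,i]\cup P[i,x]=P[y,x]=\calA[g,h]$ directly, whereas the paper uses $P(b_i^i,a_i^i)$ to cover the part of $\alpha_f$ clockwise of $p_i$; both are valid.
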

\begin{proof}    
By the induction hypothesis on $t'$, $\calL_{t'}(i)$ contains a sublist 
$L_x'$ such that $\calA[f,h] \subseteq L_x'$ and $w'(L_x') \leq w(S[f,h])$. 
Similarly, $\calL_{t+1-t'}(i)$ contains a sublist $L_y'$ such that 
$\calA[g,f] \subseteq L_y'$ and $w'(L_y') \leq w(S[g,f])$. See Figure~\ref{fig:sublist60}.

\begin{figure}[t]
\begin{minipage}[h]{\textwidth}
\begin{center}
\includegraphics[height=2.3in]{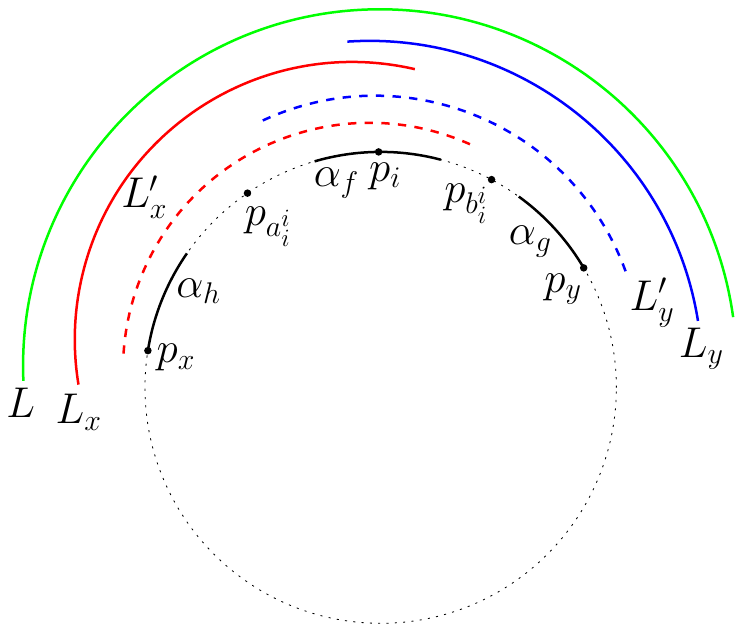}
\caption{\footnotesize Illustrating the proof of Lemma~\ref{lem:correct30}. The red (resp., blue) dashed arc represents $L_1'$ (resp., $L_2'$). The red (resp., blue, green) solid arc represents $L_1$ (resp., $L_2$, $L$). The black solid arcs on the dotted circle represents sublists of $\calA$.}
\label{fig:sublist60}
\end{center}
\end{minipage}
\vspace{-0.1in}
\end{figure}

Recall that $\alpha_f$ is the main sublist of $p_i$. Since $p_i\in \alpha_f$, by the definitions of $b_i^i$ and $a_i^i$, $\alpha_f\subseteq P(b_i^i,a_i^i)$ holds. Since $P[i,x]\subseteq L_x$, we have $\calA[f,h] \subseteq P(b_i^i,a_i^i) \cup L_x$.  
Moreover, $P[i,x]\subseteq \calA[f,h]$ by definition. As $\calA[f,h]\subseteq L_x'$, we have $P[i,x]  \subseteq L_x'$.  
Since $L_x'\in \calL_{t'}(i)$ and $L_x$ is the minimum-value sublist in $\calL_{t'}(i)$ that contains $P[i,x]$, it follows that $w'(L_x) \leq w'(L_x')$. Because $w'(L_x') \leq w(S[f,h])$, we obtain $w'(L_x) \leq w(S[f,h])$.


By a symmetric argument, we can obtain  $\calA[g,f] \subseteq P(b_i^i,a_i^i) \cup L_y$ and 
$w'(L_y) \leq w(S[g,f])$.


Combining the above leads to $\calA[g,h] = \calA[f,h] \cup \calA[g,f] \subseteq P(b_i^i,a_i^i) \cup L_x \cup L_y = L$ and  
$w'(L) = w'(L_x) + w'(L_y) - w_i \leq w(S[f,h]) + w(S[g,f]) - w_i = w(S[g,h])$, where the last equality is due to Lemma~\ref{lem:decompbi}. The lemma thus follows. 
\end{proof}

\subsection{Time analysis and implementation}
\label{sec:time}

In the $t$-th iteration, $1\leq t\leq k$, the algorithm performs $O(tn^3)$ minimum-value enclosing sublist queries: Given a sublist $L$ of $P$, compute the minimum value sublist containing $L$ in a set $\calL$ of sublists.  
This problem has been studied in \cite{ref:TkachenkoDo25}. We summarize the result in the following lemma. 

\begin{lemma}{\em(\cite{ref:TkachenkoDo25})}
\label{lem:enclosequery}
Given a set $\calL$ of $m$ sublists of $P$, each associated with a value, we can construct a data structure for $\calL$ in $O(m\log m)$ time, with $m=|\calL|$, so that each minimum-value enclosing sublist query can be answered in $O(\log^2 m)$ time.     
\end{lemma}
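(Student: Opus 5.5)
The plan is a reduction from cyclic sublists to ordinary intervals, followed by a two-dimensional dominance-style range-minimum structure. I treat this as reconstructing the structure of \cite{ref:TkachenkoDo25}.

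\textbf{Reduction to intervals.} Cut the cyclic order at $p_1$ and duplicate the sequence, so that index $i$ and $i+n$ both stand for $p_i$. Each sublist $L=P[a,b]$ of $\calL$ becomes a real interval $[a,b']$ with $a\le b'<a+n$, where $b'=b$ if $a\le b$ and $b'=b+n$ otherwise. I add its shifted copy $[a+n,b'+n]$ as well, so that $\calL$ is represented by at most $2m$ intervals on $[1,2n]$. A sublist that equals all of $P$ is handled separately: I simply keep the minimum value among such sublists, since it contains every query.

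A query sublist $Q=P[c,d]$ maps to $[c,d']$ with $c\le d'$ and $c\in[1,n]$. The key check is that $Q\subseteq L$ as cyclic sublists if and only if $[c,d']$ is contained in one of the two unrolled copies of $L$. This holds because both sublists have length less than $n$ (or $L$ is all of $P$), so containment cannot wrap around in more than one way.

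\textbf{Two-dimensional query.} Containment $[c,d']\subseteq[a,b]$ means $a\le c$ and $b\ge d'$. So each interval becomes a point $(a,b)$ with a weight, and the query asks for the minimum weight in the quadrant $\{a\le c,\ b\ge d'\}$.

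For this I build a range tree: a balanced search tree over the $a$-coordinates. At each node I store its points sorted by $b$, together with suffix minima of the weights. Construction takes $O(m\log m)$ time, provided I pre-sort by $b$ once and merge bottom-up. A query decomposes $a\le c$ into $O(\log m)$ canonical nodes. At each node a binary search on $b\ge d'$ followed by a suffix-minimum lookup costs $O(\log m)$, for $O(\log^2 m)$ in total. This matches the claimed bounds, so fractional cascading is not needed. I also return the sublist attaining the minimum, not just its value, by storing argmins alongside the suffix minima.

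\textbf{Main obstacle.} The real difficulty is the cyclic bookkeeping in the reduction step. The conversion must neither miss containments that occur across the wrap point nor report false ones. The troublesome cases are sublists of full length $n$ and a query that itself wraps around. My plan is to fix the query's start at $c\in[1,n]$, which is what forces the shifted copies of $L$. I then verify directly that every containing cyclic sublist corresponds to exactly one unrolled copy containing $[c,d']$. The data-structure part itself is routine.
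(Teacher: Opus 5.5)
\textbf{The cyclic reduction fails: your shifted copies go the wrong way.} Your data-structure step is fine. A range tree on left endpoints, with each node storing right endpoints in sorted order plus suffix minima and argmins, answers the quadrant-minimum query in $O(\log^2 m)$ time after $O(m\log m)$ preprocessing, which is the stated bound. The paper gives no argument of its own and imports the lemma from \cite{ref:TkachenkoDo25}, so your reduction carries the whole proof. The error sits exactly in the step you flagged as the main obstacle.

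You normalize queries to start at $c\in[1,n]$. The copy $[a+n,b'+n]$ has left endpoint $a+n\ge n+1>c$, so it can never contain $[c,d']$, and in effect only $[a,b']$ is used. This misses every containment where $L$ wraps and $Q$ lies in the part of $L$ after the cut. Take $n=5$ and $L=P[4,2]=\langle p_4,p_5,p_1,p_2\rangle$; your intervals are $[4,7]$ and $[9,12]$. The query $Q=P[1,2]$ becomes $[1,2]$. Then $Q\subseteq L$, yet $[1,2]$ lies in neither interval. So your ``key check'' is false as stated, and the direct verification you promise would fail on this case. The query could then return a non-minimal sublist, or none.

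\textbf{The fix.} For a wrapping $L=P[a,b]$ (so $a>b$ and $b'=b+n$), use the copy $[a-n,b]$ instead of $[a+n,b+2n]$. Equivalently, keep your copies and issue the query as $[c+n,d'+n]$. With this change the case analysis closes:
\begin{itemize}
\item If neither $Q$ nor $L$ wraps, containment is just $a\le c$ and $d\le b$.
\item If $L$ wraps and $L\neq P$, the gap $P(b,a)$ is nonempty. A non-wrapping $Q$ avoids the gap, so it lies in $\{p_a,\dots,p_n\}$, caught by $[a,b+n]$, or in $\{p_1,\dots,p_b\}$, caught by $[a-n,b]$.
\item If both wrap, containment forces $c\ge a$ and $d\le b$, caught by $[a,b+n]$.
\item If $Q$ wraps and $L$ does not, then $L=P$, which you already handle separately.
\end{itemize}
No false positives arise. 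Each copy is an integer interval of length less than $n$ whose positions reduce modulo $n$ to exactly the points of $L$, and the query interval reduces to exactly $Q$.
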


In our algorithm, the total size of the involved sets $\calL$ for which we need to construct the above data structure is $O(n^3)$ in each iteration. Hence, the total time spent on these queries (including the time for constructing the data structures) in the $t$-th iteration of the algorithm is $O(tn^3\log^2 n)$ time. 

In addition, in the $t$-th iteration, the algorithm performs $O(tn^2)$ operations to compute the indices $a_i^j$ and $b_i^j$. 
To this end, the following lemma provides a data structure. 

\begin{lemma}
\label{lem:firstout}
We can construct a data structure for $P$ in $O(n\log^2 n)$ time such that the indices $a_i^j$ and $b_i^j$ can be computed in $O(\log^2 n)$ time for any two points $p_i, p_j\in P$.
\end{lemma}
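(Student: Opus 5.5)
We reduce the computation of $a_i^j$ (and symmetrically $b_i^j$) to a first-hit query on a balanced binary tree built over the cyclic sequence $P$. Duplicate $P$ into the linear sequence $p_1,\ldots,p_n,p_1,\ldots,p_n$ to remove cyclicity. Build a segment tree $T$ on this sequence. For each node $v$ of $T$, with canonical subsequence $P_v$, store a structure that answers the following query: given a query disk $D_{p_i}$, does some $p\in P_v$ satisfy $|D_{p_i}D_p|>0$? Equivalently, we ask whether $\max_{p\in P_v}(|pp_i|-r_p)>r_{p_i}$.

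The quantity $\max_{p\in P_v}(|p_ip|-r_p)$ is an additively-weighted \emph{farthest} Voronoi query. So at each node $v$ we build the farthest-point additively weighted Voronoi diagram $\fdvd(P_v)$, with weight $-r_p$ and distance $|qp|-r_p$, together with a point-location structure. This diagram has linear complexity, is a tree-like subdivision, and can be built in $O(|P_v|\log|P_v|)$ time by standard techniques (divide and conquer or the lifting/envelope approach of Sharir/Fortune style). Then $\max_{p\in P_v}(|p_ip|-r_p)$ is obtained by locating $p_i$ and evaluating the distance to the owner of its cell, which costs $O(\log n)$. Summing over the $O(\log n)$ levels gives $O(n\log^2 n)$ preprocessing.

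To compute $a_i^j$, we first test directly whether $|D_{p_i}D_{p_j}|>0$; if so, we return $j$. Otherwise we search for the first index $\ell>j$ in the doubled sequence, with $\ell<j+n$, such that $|D_{p_i}D_{p_\ell}|>0$. The interval $(j,j+n)$ decomposes into $O(\log n)$ canonical nodes ordered left to right. We query them in order with the max-value test and find the first node whose maximum exceeds $r_{p_i}$. We then descend from that node: at each step we query the left child, go left if the test succeeds and right otherwise. This is $O(\log n)$ node queries at $O(\log n)$ each, so $O(\log^2 n)$ in total. If no node succeeds, we return $0$. The index $b_i^j$ is computed symmetrically by searching leftward. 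Correctness follows because the test at a node is exactly the existence of a disk in that range that is disjoint from $D_{p_i}$.

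\textbf{Main obstacle.} The hard step is justifying that the additively-weighted farthest Voronoi diagram has linear size and can be built in $O(m\log m)$ time. Some weighted sites may have empty cells, which is harmless for max queries. Cells of a farthest additively weighted diagram are unbounded and connected, and the diagram forms a tree, so its size is $O(m)$. I would try first to cite known constructions. Failing that, an alternative is to avoid diagrams entirely and test whether $D_{p_i}$ lies inside the intersection of the complements' ``reach''. Equivalently, this asks whether $p_i$ lies outside $\bigcap_{p\in P_v}$ of the disks of radius $r_p+r_{p_i}$. That formulation depends on $r_{p_i}$, however, so the farthest-diagram formulation with a point-location query is the cleaner route.
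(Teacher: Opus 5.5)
Your proposal is correct and follows essentially the same route as the paper: a balanced binary tree over $P$ that stores, at each node, the farthest-site (additively weighted, distance $|qp|-r_p$) Voronoi diagram of the disks with a point-location structure, so that the node test ``is some disk of $P_v$ disjoint from $D_{p_i}$?'' costs $O(\log n)$; this is followed by a first-hit search over $O(\log n)$ nodes and a top-down descent. The differences are cosmetic. You handle the cyclic order by doubling the sequence and using the canonical decomposition of $(j,j+n)$, whereas the paper treats wrap-around as a separate case restarting from $p_1$; your version is, if anything, cleaner. The construction you flag as the main obstacle is exactly what the paper takes from Rappaport's $O(m\log m)$ algorithm for the farthest-site Voronoi diagram of disks.
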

\begin{proof}
The algorithm is similar to an algorithm for the unit-disk case given in \cite{ref:TkachenkoDo25} except that we replace the farthest point Voronoi diagram by the farthest disk Voronoi diagram~\cite{ref:RappaportCo89}. For completeness, we present the details in the following. 

We start with constructing a complete binary tree $T$ whose leaves from left to right store $p_1,p_2,\ldots,$ $p_n$, respectively. For each node $v\in T$, let $P_v$ denote the subset of points of $P$ in the leaves of the subtree rooted at $v$. We construct the farthest disk Voronoi diagram $\fdvd(v)$ on $\calD(P_v)$~\cite{ref:RappaportCo89}, i.e., the set of disks centered at the points of $P_v$. We also build a point location data structure on $\fdvd(v)$~\cite{ref:EdelsbrunnerOp86,ref:KirkpatrickOp83}. Constructing $\fdvd(v)$ can be done in $O(|P_v|\log|P_v|)$ time~\cite{ref:RappaportCo89}. Constructing the point location data structure takes $O(|P_v|)$ time~\cite{ref:EdelsbrunnerOp86,ref:KirkpatrickOp83}. Doing this for all nodes $v\in T$ takes $O(n\log^2 n)$ time.

Given two points $p_i,p_j\in P$, we can compute the index $a_i^j$ in $O(\log^2 n)$ time as follows. 
We first check whether $a_i^j=0$, i.e., whether $|D_{p_i}D_p|\leq 0$ for all points $p\in P$. For this, we compute the farthest disk $D_{p'}$ of $\calD(P)$ from $p_i$, by finding the cell of $\fdvd(v)$ containing $p_i$ for the root $v$ of $T$. Notice that $a_i^j=0$ if and only if $|D_{p'}D_{p_i}|\leq 0$. Next, we check if $a_i^j=j$ by testing whether $|D_{p_i}D_{p_j}|>0$ holds. In the following, we assume that $a_i^j$ is not equal to $0$ or $j$. There are two cases depending on whether $a_i^j$ is larger or smaller than $j$. 

\begin{itemize}
\item 
We first consider the case $a_i^j>j$. Let $v_j$ be the leaf of $T$ storing $p_j$. Starting from $v_j$, we go up on $T$ until the first node $v$ whose right child $u$ has the following property: The distance between $p_i$ and its farthest disk $D_{p'}$ for $p'\in P_u$ is greater than $r_i$ (meaning that $|D_{p_i}D_{p'}|>0$). Let $v^*$ denote such a node $v$. Specifically, starting from $v=v_j$, if $v$ does not have a right child, then set $v$ to its parent. Otherwise, let $u$ be the right child of $v$. Using a point location query on $\fdvd(P_u)$, we find the farthest disk $D_{p'}$ of $p_i$ with $p'\in P_u$. If $|D_{p'}D_{p_i}|\leq 0$, then we set $v$ to its parent; otherwise, we set $v^*=v$. Since $a_i^j>j$, $v^*$ is guaranteed to be found. Next, starting from the right child of $v^*$, we perform a top-down search process. For each node $v$, let $u$ be its left child. We find the farthest disk $D_{p'}$ of $p_i$ with $p'\in P_u$. If $|D_{p'}D_{p_i}|> 0$, then we set $v$ to $u$; otherwise, we set $v$ to its right child. The process will eventually reach a leaf $v$, which stores the point $a_i^j$. The total time is $O(\log^2 n)$ as the search process calls point location queries $O(\log n)$ times and each point location query takes $O(\log n)$ time. 

\item 
For the case $a_i^j<j$, $D_{p_i}$ intersects the disk centered at the point in any leaf to the right of $v_j$ and thus the bottom-up procedure in the above algorithm will eventually reach the root. Note that $a_i^j$ is the first point in $P[1,j-1]$ whose disk does not intersect $D_{p_i}$. We proceed with the following. If $|D_{p_1}D_{p_i}|> 0$, then $a_i^j=1$ and we can stop the algorithm. Otherwise, starting from $v=v_1$, the leftmost leaf of $T$, we apply the same algorithm as in the above case, i.e., first run a bottom-up procedure and then a top-down one. The total time of the algorithm is $O(\log^2 n)$. 
\end{itemize}

Computing $b_i^j$ can be done similarly in $O(\log^2 n)$ time. 
\end{proof}

In summary, the $t$-th iteration of the algorithm takes $O(tn^3\log^2 n)$ time. As there are $k$ iterations, the total time of the algorithm is $O(k^2n^3\log^2 n)$. We thus have the following theorem.

\begin{theorem}\label{thm:domsetwgt}
Given a number $k$ and a set of $n$ weighted disks whose centers are in convex position in the plane, we can find in $O(k^2n^3\log^2 n)$ time a minimum-weight dominating set of size at most $k$ in the disk graph, or report that no such dominating set exists. 
\end{theorem}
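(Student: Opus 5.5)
The theorem is obtained by assembling the algorithm of Section~\ref{sec:description} with the correctness and implementation results already established. The plan has three parts: argue correctness, argue that the ``no solution'' report is right, and bound the running time.

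\textbf{Correctness when a solution exists.} Suppose a dominating set of size at most $k$ exists. Then an optimal one $S$ of size at most $k$ exists, and the proof of Lemma~\ref{lem:linesep} applies to it verbatim, since removing a dominated disk only decreases size and weight. Lemma~\ref{lem:correct10}, proved by induction on $t$ with the base case supplied by Lemma~\ref{lem:onesizegroup}, shows that $\calL_k$ contains a sublist equal to $P$. Lemma~\ref{lem:correctweight} then gives that the returned $S_{L^*}$ satisfies $w(S_{L^*})\le w'(L^*)=W^*$. The algorithm invariants give $|S_{L^*}|\le k$ and that $S_{L^*}$ dominates $P$, so $S_{L^*}$ is a minimum-weight dominating set of size at most $k$.

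\textbf{Correctness when no solution exists.} In this case every $L\in\calL_k$ with $L=P$ would, by the invariants, yield a dominating set of size at most $k$. So no such $L$ exists, and the report ``no such dominating set'' is correct.

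\textbf{Running time.} This is the step that needs the most bookkeeping.
\begin{itemize}
\item First, the preprocessing of Lemma~\ref{lem:firstout} takes $O(n\log^2 n)$ time.
\item Initialization ($t=1$) takes $O(n\log^2 n)$ time.
\item In iteration $t$, before the processing procedures run, we build the structure of Lemma~\ref{lem:enclosequery} on each $\calL_{t'}(i)$ and on each $\calL_{t'}$ for $t'<t$. We only need to build each structure once, when its list is finalized. Since $|\calL_{t'}(i)|=O(n^2)$ and $|\calL_{t'}|=O(n^3)$, the total build cost over all iterations is $O(kn^3\log n)$.
\item The counterclockwise and clockwise procedures, for each $i$, iterate over $O(n)$ choices of $j$, $O(n)$ choices of $z$, and $O(t)$ choices of $t'$. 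Each choice costs two queries and one $a_i^{z_2+1}$ computation, i.e., $O(\log^2 n)$ time. This is $O(tn^2\log^2 n)$ per point $i$.
\item The bidirectional procedure, for each $i$, iterates over $O(n^2)$ pairs $(x,y)$ and $O(t)$ values of $t'$, again at $O(\log^2 n)$ per choice, i.e., $O(tn^2\log^2 n)$ per point $i$.
\end{itemize}
Summing over the $n$ points gives $O(tn^3\log^2 n)$ for iteration $t$. Summing over $t\le k$ gives $O(k^2n^3\log^2 n)$.

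\textbf{Final step.} Scan $\calL_k$, which has size $O(n^3)$, for sublists equal to $P$ and take the one of minimum $w'$. The set $S_L$ can be stored implicitly by recording, for each $L$, the two sublists it was combined from. The solution is then recovered by unfolding these back-pointers in $O(kn)$ time, and the union of $O(k)$ sets is formed in $O(n)$ time. This keeps the per-sublist cost in each iteration at $O(1)$ besides the queries, which completes the bound.
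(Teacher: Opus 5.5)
Your proposal is correct and follows the paper's own route. Correctness comes from Lemma~\ref{lem:correctweight} (via the induction of Lemma~\ref{lem:correct10}) together with the algorithm invariants, and the time bound comes from charging $O(tn^3)$ minimum-value enclosing sublist queries and index computations per iteration at $O(\log^2 n)$ each, summed over $k$ iterations. The extra details you supply are sound and make explicit what the paper leaves implicit: the argument for the ``no such dominating set'' report, building each query structure once, and representing $S_L$ by back-pointers.
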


Applying Theorem~\ref{thm:domsetwgt} with $k=n$ leads to the following result. 
\begin{corollary}\label{coro:domsetwgt}
Given a set of $n$ weighted disks whose centers are in convex position in the plane, we can compute a minimum-weight dominating set in the disk graph in $O(n^5\log^2 n)$ time. 
\end{corollary}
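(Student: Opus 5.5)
The plan is to derive Corollary~\ref{coro:domsetwgt} directly from Theorem~\ref{thm:domsetwgt} by choosing the size bound $k=n$; nothing new is needed beyond two short checks.

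\textbf{Step 1: the size bound $k=n$ loses nothing.} Every dominating set is a subset of $P$, so it has size at most $n$. Hence, for $k=n$, the bounded-size problem of Section~\ref{sec:domwgt} has the same feasible family as the unconstrained weighted problem, and $W^*$ is exactly the minimum weight of any dominating set.

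\textbf{Step 2: the algorithm always returns a set.} A dominating set always exists, since $P$ itself dominates $G(P)$. So the ``no such dominating set'' outcome of Theorem~\ref{thm:domsetwgt} cannot occur. By Lemma~\ref{lem:correctweight}, the returned set $S_{L^*}$ is a minimum-weight dominating set.

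\textbf{Step 3: running time.} Substituting $k=n$ into the bound $O(k^2n^3\log^2 n)$ gives $O(n^5\log^2 n)$.

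I do not expect any real obstacle. The only point to verify is that the analysis behind Theorem~\ref{thm:domsetwgt} does not assume $k$ is small, for instance in the per-iteration cost $O(tn^3\log^2 n)$ for $t\le k$. That bound is uniform in $t$, so summing it over $t=1,\dots,n$ gives the stated total.
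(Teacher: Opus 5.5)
Your proposal is correct and matches the paper's proof, which obtains the corollary simply by applying Theorem~\ref{thm:domsetwgt} with $k=n$. Your extra checks make explicit what the paper leaves implicit: the bound $k=n$ imposes no constraint, $P$ itself guarantees a dominating set exists, and $\sum_{t\le n} O(tn^3\log^2 n)=O(n^5\log^2 n)$.
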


\section{The unweighted case}
\label{sec:domunwgt}

In this section, we consider the unweighted case. The goal is to compute a smallest dominating set in the disk graph $G(P)$. 
By setting the weights of all points of $P$ to $1$ and applying Corollary~\ref{coro:domsetwgt}, one can solve the unweighted problem in $O(n^5\log^2 n)$ time. In the following, we provide an algorithm of $O(k^2n\log n)$ time, where $k$ is the size of the smallest dominating set.

We follow the dynamic programming scheme of the weighted case, but incorporate a greedy strategy by taking advantage of the property that all points of $P$ have the same weight. Section~\ref{sec:unweightdescription} first describes the algorithm. The correctness proof is given in Section~\ref{sec:unweightcorrect}. Section~\ref{sec:unweighttime} finally discusses the implementation and the time analysis. 

\subsection{Algorithm description}
\label{sec:unweightdescription}

We still proceed in iterations $t=1,2\ldots$.
In each $t$-th iteration, we compute for each $p_i\in P$ a set $\calL_t(i)$ of $O(t)$ sublists, with each sublist $L$ associated with a subset $S_L\subseteq P$, such that the following algorithm invariants are maintained: (1) $S_L$ dominates $L$; (2) $p_i\in S_L$; (3) $p_i\in L$; (4) $|S_L|\leq t$. Define $\calL_t=\cup_{p_i\in P}\calL_t(i)$. Hence, $|\calL_t|=O(nt)$.

Let $k$ be the smallest dominating set size. We will show that after $k$ iterations, $\calL_k$ is guaranteed to contain a sublist that is $P$. As such, if a sublist that is $P$ is computed for the first time in the algorithm, then we can stop the algorithm.

Initially, $t=1$, and our algorithm does the following for each $p_i\in P$. We compute $L=P(b_i^i, a_i^i)$, and set $S_L=\{p_i\}$. We let $\calL_1(i)=\{L\}$. Obviously, all algorithm invariants hold for $L$. 

In general, suppose that for each $t'\in [1, t - 1]$, we have computed a collection $\calL_{t'}(i)$ of $O(t')$ sublists for each $p_i\in P$, with each sublist $L' \in \calL_{t'}(i)$ associated with a subset $S_{L'} \subseteq P$, such that the algorithm invariants hold, i.e., $L'$ is dominated by $S_{L'}$, $p_i\in S_{L'}$, $p_i\in L'$, and $|S_{L'}|\leq t'$.

The $t$-th iteration works as follows. For each point $p_i \in P$, we perform counterclockwise/clockwise processing procedures as well as bidirectional preprocessing procedures. Note that, unlike the weighted case, these procedures now incorporate greedy strategies. 
The details are given below. 

\subsubsection{Counterclockwise/clockwise processing procedures}
For each point $p_i\in P$, for each $t'\in [1,t-1]$, we perform the following {\em counterclockwise processing procedure}. 
One can still refer to Figure~\ref{fig:sublist10} (but the notation $z$ in the figure can be ignored). 

By our algorithm invariants for $t'$, every sublist of $\calL_{t'}(i)$ contains $p_i$. 
We first find the sublist of $\calL_{t'}(i)$ whose counterclockwise endpoint is farthest from $p_i$ (in the counterclockwise direction). Let $L_1$ denote the list and let $p_{z_1}$ be the counterclockwise endpoint of $L_1$.
Note that finding $L_1$ can be done in $O(1)$ time if the sublist of $\calL_{t'}(i)$ whose counterclockwise endpoint farthest from $p_i$ is explicitly maintained
(e.g., in the $t'$-th procedure, whenever we add a list to $\calL_{t'}(i)$, we update the sublist whose counterclockwise endpoint is farthest from $p_i$).
We then perform a {\em counterclockwise farthest enclosing sublist query} on $\calL_{t-t'}$ to 
compute a sublist containing $p_{z_1+1}$ such that its counterclockwise endpoint is farthest from $p_{z_1+1}$ (in the counterclockwise direction).
Let $L_2$ be the sublist and $z_2$ the counterclockwise endpoint of $L_2$.

Next, we compute the index $a_{i}^{z_2+1}$. 
By construction, the union of the following four sublists is a (contiguous) sublist of $P$: $P(b_i^i,a_i^i)$, $L_1$, $L_2$, and $P(z_2, a_{i}^{z_2+1})$. Denote this combined sublist by $L(t')$, or simply $L$.  We set $S_L= S_{L_1} \cup S_{L_2}$.

\begin{observation}
All algorithm invariants hold for $L$. 
\end{observation}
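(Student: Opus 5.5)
We check the four unweighted invariants for $L=L(t')$ in turn, using the invariants already established for $L_1\in\calL_{t'}(i)$ and $L_2\in\calL_{t-t'}$, together with Definition~\ref{def:ab}. The argument parallels the weighted counterclockwise observation; we drop the weight bookkeeping and add the membership condition $p_i\in L$.

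\emph{Invariants (2), (3) and (4).} These are immediate. Since $L_1\in\calL_{t'}(i)$, the inductive invariants give $p_i\in S_{L_1}\subseteq S_L$, so (2) holds. For (3), note that $p_i\in P(b_i^i,a_i^i)\subseteq L$. This uses the fact that $P(b_i^i,a_i^i)$ contains $p_i$ whenever it is nonempty, since $D_{p_i}$ meets itself. Alternatively, $p_i\in L_1\subseteq L$ directly by invariant (3) for $L_1$. For (4), we have $|S_L|\le|S_{L_1}|+|S_{L_2}|\le t'+(t-t')=t$.

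\emph{Invariant (1).} This is the only part with content. We need every point of $L$ to be dominated by $S_L$, and we check the four pieces of $L$ separately.
\begin{itemize}
\item Points of $P(b_i^i,a_i^i)$ are dominated by $p_i$. By Definition~\ref{def:ab}, every point strictly between $p_{b_i^i}$ and $p_{a_i^i}$ around $p_i$ has a disk meeting $D_{p_i}$. In the degenerate case $a_i^i=b_i^i=0$, the sublist is all of $P$ and $p_i$ dominates everything.
\item Points of $L_1$ are dominated by $S_{L_1}$, and points of $L_2$ by $S_{L_2}$, by their invariants.
\item Points of $P(z_2,a_i^{z_2+1})$ are dominated by $p_i$. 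By Definition~\ref{def:ab}, $p_{a_i^{z_2+1}}$ is the first point counterclockwise from $p_{z_2+1}$ whose disk misses $D_{p_i}$, so every point of $P[z_2+1,a_i^{z_2+1})$ meets $D_{p_i}$. The edge cases are handled as follows. If $a_i^{z_2+1}=z_2+1$, the piece is empty. If $a_i^{z_2+1}=0$, the piece wraps to everything, and $p_i$ dominates all of $P$ anyway.
\end{itemize}
Since $p_i\in S_L$, the union $L$ is dominated by $S_L$.

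\emph{Main obstacle.} The only delicate point is that the union of the four pieces really is a contiguous sublist, so that $L$ is well defined. The first two pieces overlap because both contain $p_i$: it lies in $P(b_i^i,a_i^i)$ and, by invariant (3), in $L_1$. Piece $L_2$ contains $p_{z_1+1}$, which is adjacent to the endpoint $p_{z_1}$ of $L_1$. The last piece begins right after $p_{z_2}$. So consecutive pieces abut or overlap, and the union is a contiguous arc. It may wrap around all of $P$, in which case $L=P$; that is still a valid sublist.
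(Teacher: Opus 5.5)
Your proof is correct and follows the paper's argument for the unweighted counterclockwise procedure: invariants (2)--(4) come directly from the invariants of $L_1$ and $L_2$, and domination is checked piece by piece, with $P(b_i^i,a_i^i)$ and $P(z_2,a_i^{z_2+1})$ dominated by $p_i$ via Definition~\ref{def:ab}. Your handling of the degenerate index cases and your argument that the union is contiguous go beyond the paper, which simply asserts contiguity ``by construction'' when it defines $L$.
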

\begin{proof}   
We argue that the algorithm invariants hold for $L$, i.e., $L$ is dominated by $S_L$, $p_i\in S_L$, $p_i\in L$, and $|S_L|\leq t$. 
Indeed, by the algorithm invariants for $t'$, $L_1$ is dominated by $S_{L_1}$, $p_i \in S_{L_1}$, $p_i\in L_1$, and $|S_{L_1}| \leq t'$. Since $L_2 \in \calL_{t - t'}$, by the algorithm invariants for $t-t'$, $L_2$ is dominated by $S_{L_2}$ and $|S_{L_2}| \leq t - t'$. In addition, by definition, the two sublists $P(b_i^i, a_i^i)$ and $P(z_2, a_i^{z_2+1})$ are dominated by $p_i$. Combining the above, the algorithm invariants hold for $L$. 
\end{proof}

Among the $O(t)$ sublists $L(t')$ computed above for all $t'\in [1,t-1]$, we keep the one $L$ whose counterclockwise endpoint is farthest from $p_i$ and add it to $\calL_t(i)$. 

The counterclockwise processing procedure described above for $p_i$ adds at most one sublist to $\calL_t(i)$. Symmetrically, we also perform a clockwise processing procedure for $p_i$, which also contributes at most one sublist to $\calL_t(i)$. 



\subsubsection{Bidirectional processing procedures}
For each point $p_i \in P$,  for each $t'\in [2, t-1]$, we perform the following bidirectional processing procedure. 
One can still refer to Figure~\ref{fig:sublist20} (but the notation $x$ and $y$ in the figure can be ignored). 

We find the sublist $L_x$ in $\calL_{t'}(i)$ whose counterclockwise endpoint is farthest from $p_i$ (along the counterclockwise direction), and also find the sublist $L_y$ in $\calL_{t+1-t'}(i)$ whose clockwise endpoint is farthest from $p_i$ (along the clockwise direction). 
The following three sublists form a (contiguous) sublist of $P$: $P(b_i^i,a_i^i)$, $L_x$, and $L_y$. Let $L$ be the union of the above three sublists. We set $S_L = S_{L_x} \cup S_{L_y}$, and add $L$ to $\calL_t(i)$.


\begin{observation}
All algorithm invariants hold for $L$. 
\end{observation}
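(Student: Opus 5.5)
We need to check the four invariants for the sublist $L=P(b_i^i,a_i^i)\cup L_x\cup L_y$ with $S_L=S_{L_x}\cup S_{L_y}$ produced by the unweighted bidirectional procedure: (1) $S_L$ dominates $L$, (2) $p_i\in S_L$, (3) $p_i\in L$, and (4) $|S_L|\leq t$. The argument mirrors the weighted bidirectional observation, minus the weight bookkeeping. It relies on the invariants already established for $\calL_{t'}(i)$ and $\calL_{t+1-t'}(i)$ from earlier iterations, since $2\leq t'\leq t-1$ implies $1\leq t+1-t'\leq t-1$.

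\emph{Domination and membership.}
\begin{itemize}
\item Since $L_x\in\calL_{t'}(i)$, the invariants for $t'$ give that $S_{L_x}$ dominates $L_x$, $p_i\in S_{L_x}$, $p_i\in L_x$, and $|S_{L_x}|\leq t'$.
\item Since $L_y\in\calL_{t+1-t'}(i)$, we likewise have that $S_{L_y}$ dominates $L_y$, $p_i\in S_{L_y}$, $p_i\in L_y$, and $|S_{L_y}|\leq t+1-t'$.
\item By Definition~\ref{def:ab}, every point of $P(b_i^i,a_i^i)$ has a disk intersecting $D_{p_i}$, so this sublist is dominated by $p_i\in S_L$.
\item Each of the three pieces of $L$ is therefore dominated by $S_L$, so $S_L$ dominates $L$, which is (1).
\item Invariants (2) and (3) are immediate, because $p_i\in S_{L_x}\subseteq S_L$ and $p_i\in L_x\subseteq L$.
\end{itemize}

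\emph{Size bound.} Because $p_i$ lies in both $S_{L_x}$ and $S_{L_y}$, it is counted only once in the union:
\[
|S_L|\leq |S_{L_x}|+|S_{L_y}|-1\leq t'+(t+1-t')-1=t .
\]
This gives (4).

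\emph{Contiguity.} We also confirm that $L$ is a genuine sublist of $P$, as the procedure asserts.
\begin{itemize}
\item Each of $L_x$, $L_y$, and $P(b_i^i,a_i^i)$ is a contiguous arc of $\calH(P)$ containing $p_i$, where $p_i\in L_x$ and $p_i\in L_y$ come from invariant (3).
\item Any union of arcs that share the common point $p_i$ is again an arc, unless it wraps around, in which case it is all of $P$.
\item Either way, $L$ is a sublist.
\end{itemize}

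No step is a real obstacle. The only point needing care is using $p_i\in S_{L_x}\cap S_{L_y}$, so that the size bound is $t$ rather than $t+1$.
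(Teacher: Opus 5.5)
Your proposal is correct and follows the same route as the paper: apply the invariants for $t'$ and $t+1-t'$ to $L_x$ and $L_y$, note that $p_i$ dominates $P(b_i^i,a_i^i)$, and use $p_i\in S_{L_x}\cap S_{L_y}$ to get $|S_L|\leq t'+(t+1-t')-1=t$. You are slightly more explicit than the paper, which leaves the $-1$ in the count, the domination of $P(b_i^i,a_i^i)$, and the contiguity of $L$ implicit (and has a typo writing $|S_{L_x}|$ where $|S_{L_y}|$ is meant).
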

\begin{proof}    
We argue that all algorithm invariants hold for $L$. By the algorithm invariants for $t'$ and $t+1-t'$, $L_x$ is dominated by $S_{L_x}$, $L_y$ is dominated by $S_{L_y}$, $|S_{L_x}|\leq t'$, $|S_{L_x}|\leq t+1-t'$, $p_i$ is in both $S_{L_x}$ and $S_{L_y}$, and $p_i$ is also in both $L_x$ and $L_y$. Hence, $S_L$ dominates $L$, $|S_L|\leq t$, $p_i\in S_L$, and $p_i\in L$. Therefore, the algorithm invariants hold for $L$. 
\end{proof}

The above adds $O(t)$ sublists to $\calL_t(i)$.

\subsubsection{Summary}

The $t$-th iteration computes $O(t)$ sublists for $\calL_t(i)$ for each point $p_i\in P$. Hence, a total of $O(tn)$ sublists are added to $\calL_t$ in the $t$-th iteration. If any sublist $L\in \calL_t$ is $P$, then we stop the algorithm and return $S_L$ as a smallest dominating set. Otherwise, we continue on the next iteration. 

\subsection{Algorithm correctness}
\label{sec:unweightcorrect}

We prove the following lemma, which establishes the correctness of the algorithm. 
\begin{lemma}\label{lem:unweightcorrect}
If the algorithm first time computes a sublist $L$ that is $P$, then $S_L$ is a smallest dominating set of $G(P)$. 
\end{lemma}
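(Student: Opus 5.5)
We prove the unweighted analogue of Lemma~\ref{lem:correct10}, with \emph{coverage} taking the place of weight. Fix a smallest dominating set $S$ with $|S|=k$, and let $\phi:\calA\to S$ be the line-separable assignment from Lemma~\ref{lem:linesep}.

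\textbf{Target claim.} For every $p_i\in S$, every $t\in[1,k]$, and every rank-$t$ sublist $\calA[g,h]$ of $p_i$:
\begin{itemize}
\item in the open case with $\alpha_g$ the main sublist (clockwise end, as in Case~(1)), the sublist of $\calL_t(i)$ whose counterclockwise endpoint is farthest from $p_i$ covers $\calA[g,h]$ on the counterclockwise side;
\item more generally, there is some $L\in\calL_t(i)$ with $\calA[g,h]\subseteq L$.
\end{itemize}
Since every point of $S$ has a rank-$k$ sublist equal to $P$, $\calL_k$ then contains $P$. Each stored $L$ has $|S_L|\le t$ and $S_L$ dominates $L$, so the first iteration $t$ producing $P$ has $t\le k$. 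By minimality of $k$ we get $t=k$ and $|S_L|=k$, which is the lemma.

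\textbf{Stronger, direction-dependent invariant.} The farthest-endpoint queries throw away the sublists $L_1'$ used in the weighted proof, so the induction must keep more information. I would prove the following by induction on $t$.
\begin{itemize}
\item If $\calA[g,h]$ is a rank-$t$ sublist of $p_i$ with main sublist $\alpha_g$, then the maximal counterclockwise reach $\mathrm{ccw}_t(i)$ over $\calL_t(i)$ extends at least to the counterclockwise endpoint of $\calA[g,h]$.
\item Symmetrically for clockwise reach.
\item For Case~(2) (main sublist interior), some $L\in\calL_t(i)$ contains $\calA[g,h]$.
\end{itemize}
Every stored list contains $P(b_i^i,a_i^i)\ni p_i$ and is contiguous, so ``reach'' is well defined. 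It is also monotone: if some list reaches a point, the farthest-reaching list covers everything from $p_i$ to that point.

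\textbf{Base case.} This is immediate from $\calA[g,g]\subseteq P(b_i^i,a_i^i)$.

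\textbf{Inductive step, Case~(1).} Apply Lemma~\ref{lem:decomp} to get $t'$, $f$, and $p_{i'}$.
\begin{itemize}
\item By induction, $L_1$ (the farthest list in $\calL_{t'}(i)$) reaches past $\alpha_f$, ending at some $z_1$.
\item If $z_1$ already lies beyond $\calA[f+1,h']$, we are done: the counterclockwise procedure only extends $L_1$, so the farthest list it produces reaches at least as far.
\item Otherwise $p_{z_1+1}$ lies in some $\alpha_e$ with $e\in[f+1,h']$. I need a list in $\calL_{t-t'}$ containing $p_{z_1+1}$ that reaches the endpoint $p_j$ of $\calA[f+1,h']$. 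The natural candidate is the rank-$(t-t')$ center $p_{i'}$ of $\calA[f+1,h']$, but the induction hypothesis only promises a list containing all of $\calA[f+1,h']$ (this is where the Case~(2) form of the claim is needed). That list contains $p_{z_1+1}$, so the counterclockwise farthest enclosing query returns $L_2$ reaching at least $p_j$.
\item In the closed case, appending $P(z_2,a_i^{z_2+1})$ covers $\alpha_h$ exactly as in Lemma~\ref{lem:correct20}, since $p_i$ dominates $\alpha_h$ and $z_2\ge j$.
\end{itemize}

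\textbf{Inductive step, Case~(2).} Apply Lemma~\ref{lem:decompbi}. Here $\calA[f,h]$ has main sublist $\alpha_f$ at its clockwise end, so the directional invariant at rank $t'$ says the farthest counterclockwise list $L_x\in\calL_{t'}(i)$ reaches the end of $\calA[f,h]$. Symmetrically, $L_y\in\calL_{t+1-t'}(i)$ reaches the clockwise end of $\calA[g,f]$. Their union with $P(b_i^i,a_i^i)$ is built by the bidirectional procedure and contains $\calA[g,h]$.

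To make the induction self-consistent, I would state one combined hypothesis covering both forms. It should say that $\calL_t(i)$ contains a list $L\supseteq\calA[g,h]$, and additionally, when the main sublist is an end sublist, that the farthest list in the corresponding direction reaches at least as far.

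\textbf{Main obstacle.} The hard step is justifying that $L_2$ is good enough. The query chooses the list in $\calL_{t-t'}$ whose reach is farthest among lists containing $p_{z_1+1}$, not one tied to $p_{i'}$. Reach beyond $p_j$ is all that matters, so the argument goes through. What needs care is the wrap-around: reach must be measured counterclockwise from $p_{z_1+1}$ without exceeding a full circle, and the combined list must not overlap itself.

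I also need to handle the possibility that $L_1$'s reach already exceeds $\calA[f+1,h']$. The weighted proof avoided this via an optimality contradiction. Here monotonicity of farthest reach makes it harmless, but I should check that the resulting list still has at most $t$ centers; it does, since $|S_{L_1}|+|S_{L_2}|\le t$.
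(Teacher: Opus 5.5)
Your proposal is correct and takes essentially the same route as the paper: induction on $t$ showing that every rank-$t$ sublist $\calA[g,h]$ of $p_i\in S$ lies in some list of $\calL_t(i)$, with the case split via Lemma~\ref{lem:decomp} and Lemma~\ref{lem:decompbi} matched to the counterclockwise and bidirectional procedures, and the same conclusion from $|S_L|\le t$ plus minimality of $k$. The two local differences are both fine. Your directional ``reach'' invariant already follows from plain containment, since every list of $\calL_t(i)$ contains $p_i$, and it states precisely what the paper's inclusion $L_1'\subseteq P(b_i^i,a_i^i)\cup L_1$ is used for (taken literally, that inclusion can fail on the clockwise side). Your remark that an overshooting $L_1$ is harmless replaces the paper's exchange argument showing $p_{z_1+1}\in\calA[f+1,h']$.
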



Let $k$ be the smallest dominating set size for $G(P)$. It suffices to show that the algorithm will stop within $k$ iterations, i.e., there exists a sublist $L\in \calL_k$ such that $L=P$. 
Many arguments are similar to the weighted case proof in Section~\ref{sec:correct}. One difference is that we need to argue that our greedy strategy works. 

Let $S$ be a smallest dominating set (thus $|S|=k$), and let $\phi: \calA \rightarrow S$ be the line-separable assignment guaranteed by Lemma~\ref{lem:linesep}. Order sublists of $\calA=\langle\alpha_1,\alpha_2,\ldots,\alpha_m\rangle$ counterclockwise along $\calH(P)$ as before (we also follow the notation before, e.g., $\calA[g,h]$, $\calA_{p_i}$, $S[g,h]$). We will prove the following lemma. 

\begin{lemma}\label{lem:unweightcorrect10}
For any point $p_i \in S$ and any $t$ with $1\leq t\leq k$, for any rank-$t$ sublist $\calA[g,h]$ of $p_i$, $\calL_t(i)$ must contain a sublist $L$ with $\calA[g,h]\subseteq L$.    
\end{lemma}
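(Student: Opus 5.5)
We argue by induction on $t$, mirroring the proof of Lemma~\ref{lem:correct10}. The new ingredient is that the greedy choices (farthest endpoint instead of minimum value) remain good enough. The base case $t=1$ is identical to the weighted case. If $p_i$ has a rank-$1$ sublist $\calA[g,h]$, then $\calA[g,h]$ is the main sublist of $p_i$ and is dominated by $p_i$. Since it contains $p_i$, it lies in $P(b_i^i,a_i^i)$, which is the unique member of $\calL_1(i)$. For the inductive step, assume the claim for all $t'<t$ and let $p_i$ be a rank-$t$ center of $\calA[g,h]$. We split into the same two cases as in Section~\ref{sec:correct}.

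\emph{Case (1): the main sublist $\alpha_g$ of $p_i$ is an end sublist.} Lemma~\ref{lem:decomp} gives $t'$, $f$, $p_{i'}$ such that $p_i$ is a rank-$t'$ center of $\calA[g,f]$ and $p_{i'}$ is a rank-$(t-t')$ center of $\calA[f+1,h']$; here $h'=h$ or $h-1$. Consider the counterclockwise procedure with this $t'$.

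\begin{itemize}
\item By induction, $\calL_{t'}(i)$ contains some $L_1'\supseteq\calA[g,f]$. Since every member of $\calL_{t'}(i)$ contains $p_i$ and $L_1$ was chosen with counterclockwise endpoint farthest from $p_i$, we get $P[i,z]\subseteq L_1$, where $p_z$ is the endpoint of $\alpha_f$. Hence $\calA[g,f]\subseteq P(b_i^i,a_i^i)\cup L_1$.
\item If $p_{z_1+1}$ is not in $\calA[f+1,h']$, then $\calA[g,h']$ is already covered by $P(b_i^i,a_i^i)\cup L_1$. This is harmless here, because in the unweighted setting we need only coverage, not a contradiction. We then just need $\alpha_h$ covered in the closed case, which follows as below.
\item Otherwise $p_{z_1+1}\in\calA[f+1,h']$. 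By induction, $\calL_{t-t'}(i')$ contains some $L_2'\supseteq\calA[f+1,h']$, and $L_2'$ contains $p_{z_1+1}$ (its sublists are contiguous and it covers the whole range). The farthest enclosing query therefore returns $L_2$ whose counterclockwise endpoint is at least as far as that of $L_2'$, i.e., at or beyond $p_j$.
\end{itemize}

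The closed case is handled by the same $a_i^{z_2+1}$ argument as in Lemma~\ref{lem:correct20}: if $p_{z_2+1}\in\alpha_h$ then $\alpha_h\subseteq P(z_2,a_i^{z_2+1})$, since $p_i$ dominates $\alpha_h$. Finally, the kept $L(t')$ has endpoint at least as far as this candidate's, so it also covers $\calA[g,h]$. Here we must check that the farthest candidate's union still starts at $P(b_i^i,a_i^i)$ on the clockwise side; it does, since every candidate contains that sublist.

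\emph{Case (2): the main sublist $\alpha_f$ is interior.} Lemma~\ref{lem:decompbi} gives $t'\in[2,t-1]$ with $p_i$ a rank-$t'$ center of $\calA[f,h]$ and a rank-$(t+1-t')$ center of $\calA[g,f]$. By induction, there are members of $\calL_{t'}(i)$ and $\calL_{t+1-t'}(i)$ containing $\calA[f,h]$ and $\calA[g,f]$ respectively. The farthest choices $L_x$ and $L_y$ extend at least as far counterclockwise and clockwise respectively from $p_i$. Together with $P(b_i^i,a_i^i)\supseteq\alpha_f$, this covers $\calA[g,h]$.

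The main obstacle is making the ``farthest endpoint dominates'' comparisons rigorous. Since sublists are cyclic, we must make sure that ``farther'' is measured consistently from $p_i$, or from $p_{z_1+1}$, and does not wrap past the start. We also need that a sublist containing both $p_i$ and the target range really contains the whole arc between them. Invariant (3), $p_i\in L$, was added precisely for this. Once Lemma~\ref{lem:unweightcorrect10} holds, applying it with $t=k$ and the rank-$k$ sublist equal to $P$ yields Lemma~\ref{lem:unweightcorrect}.
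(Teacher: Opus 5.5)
Your proposal is correct and follows the paper's proof. Both argue by induction on $t$, use the same two cases through Lemma~\ref{lem:decomp} and Lemma~\ref{lem:decompbi}, and make the same farthest-endpoint comparisons. You deviate in only two places: you handle $p_{z_1+1}\notin\calA[f+1,h']$ by direct coverage (then $L_1$ already reaches $p_j$) instead of the paper's optimality contradiction borrowed from Lemma~\ref{lem:correct20}, and you explicitly check that the single kept candidate (farthest counterclockwise endpoint over all $t'$) still covers $\calA[g,h]$, a step the paper leaves implicit.
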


Lemma~\ref{lem:unweightcorrect10} will lead to Lemma~\ref{lem:unweightcorrect}. Indeed, as discussed before, $P$ must be a rank-$k$ sublist $\calA[g,h]$ of $p_i$ with $\calA[g,h]=P$ and $S[g,h]=S$. Since $\calL_k(i)$ contains a sublist $L$ with $\calA[g,h]\subseteq L$ by Lemma~\ref{lem:unweightcorrect10}, we obtain that $L=P$. Therefore, our algorithm will find an optimal dominating set of size $k$ in the $k$-th iteration. 

In the rest of this section, we prove Lemma~\ref{lem:unweightcorrect10}, by induction on $t=1,2,\ldots,k$. 

For the base case $t=1$, suppose that $p_i\in S$ has a rank-1 sublist $\calA[g,h]$. By definition, $\calA[g,h]$ is the only sublist in the group $\calA_{p_i}$, and it is the main sublist of $p_i$. Since $p_i\in \calA[g,h]$ and $P(b_i^i, a_i^i)$ is the maximal sublist of $P$ containing $p_i$, we have $\calA[g,h]\subseteq P(b_i^i, a_i^i)$. According to our algorithm, $\calL_1(i)$ consists of a single sublist $L=P(b_i^i, a_i^i)$ with $S_L=\{p_i\}$. Hence, Lemma~\ref{lem:unweightcorrect10} holds for $t=1$. 

We now assume that Lemma~\ref{lem:unweightcorrect10} holds for every $t'\in [1,t-1]$, i.e., for every sublist $\calA[g',h']$ that is the rank-$t'$ sublist of some $p_i\in S$, $\calL_{t'}(i)$ contains a sublist $L'$ satisfying $\calA[g',h'] \subseteq L'$.
In the following, we prove that the lemma holds for $t$, i.e., for every sublist $\calA[g,h]$ that is the rank-$t$ sublist of some $p_i\in S$, $\calL_{t}(i)$ contains a sublist $L$ satisfying $\calA[g,h] \subseteq L$. 

Suppose that $p_i$ is a rank-$t$ center of a sublist $\calA[g,h]$. 
By definition, at least one of the two end sublists of $\calA[g,h]$ is in the group $\calA_{p_i}$ and $\calA[g,h]$ contains the main sublist of $p_i$. Depending on whether the main sublist of $p_i$ is an end sublist of $\calA[g,h]$, there are two cases. 

\subsection{Case (1): The main sublist of $\boldsymbol{p_i}$ is an end sublist of $\boldsymbol{\calA[g,h]}$}
Without loss of generality, we assume the clockwise end sublist of $\calA[g,h]$ is the main sublist of $p_i$, which is $\alpha_g$. 
Lemma~\ref{lem:decomp} still holds. 

Define $h'$ to be $h$ if $p_i$ is an open center and to be $h-1$ if $p_i$ is a closed center. Hence, in either case $p_{i'}$ is always a rank-$(t-t')$ center of $\calA[f+1,h']$, where $p_{i'}$ and $f$ are from Lemma~\ref{lem:decomp}.  
Let $p_j$ be the counterclockwise endpoint of $\calA[f+1,h']$. 

Now, consider the $t$-th iteration of our algorithm when $p_i$ is processed during the counterclockwise processing procedure and the value $t'$ is examined. During this step, the algorithm constructs a sublist $L$ for $\calL_t(i)$ as the union of four sublists: $P(b_i^i,a_i^i)$, a sublist $L_1$ of $\calL_{t'}(i)$ whose counterclockwise endpoint is farthest from $p_i$, a sublist $L_2$ of $\calL_{t-t'}$ containing $p_{z_1+1}$ whose counterclockwise endpoint is farthest from $p_{z_1+1}$, and $P[z_2,a_i^{z_2+1})$, where $p_{z_1}$ and $p_{z_2}$ are the counterclockwise endpoints of $L_1$ and $L_2$, respectively. Also, $S_L = S_{L_1} \cup S_{L_2}$. The following lemma completes our proof for Case (1). 

\begin{lemma}
$\calA[g,h]\subseteq L$. 
\end{lemma}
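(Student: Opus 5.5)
We mirror the proof of Lemma~\ref{lem:correct20}, replacing minimality of weights by a farthest-endpoint (greedy) comparison. By the induction hypothesis, $\calL_{t'}(i)$ contains a sublist $L_1'$ with $\calA[g,f]\subseteq L_1'$, and $\calL_{t-t'}(i')$ contains a sublist $L_2'$ with $\calA[f+1,h']\subseteq L_2'$. By invariant (3), every sublist of $\calL_{t'}(i)$ contains $p_i$, so all of them can be compared by how far their counterclockwise endpoints extend from $p_i$. Since $L_1$ was chosen to maximize this extent, its counterclockwise endpoint $p_{z_1}$ is at least as far from $p_i$ as that of $L_1'$. Moreover $p_i\in\alpha_g\subseteq P(b_i^i,a_i^i)$, so
\[
\calA[g,f]\subseteq P(b_i^i,a_i^i)\cup L_1 .
\]
One technical point: we must make sure that ``farthest counterclockwise'' is measured without wrapping past $p_i$. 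If some sublist wraps all the way around, it contains all of $P$ and the statement is trivial.

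The second step is the greedy comparison for $L_2$. There are two cases.
\begin{itemize}
\item If $p_{z_1+1}$ is not in $\calA[f+1,h']$, then $\calA[g,h']\subseteq P(b_i^i,a_i^i)\cup L_1$. In this case $L_1$ already covers everything up to $p_j$.
\item Otherwise, $p_{z_1+1}\in\calA[f+1,h']\subseteq L_2'$, so $L_2'\in\calL_{t-t'}$ is a candidate sublist containing $p_{z_1+1}$. Since $L_2$ is the sublist maximizing counterclockwise extent from $p_{z_1+1}$, its endpoint $p_{z_2}$ is at least as far as the endpoint of $L_2'$, which lies at or beyond $p_j$. Because $L_2$ also contains $p_{z_1+1}$, we get $P[z_1+1,j]\subseteq L_2$, and hence $\calA[g,h']\subseteq P(b_i^i,a_i^i)\cup L_1\cup L_2$.
\end{itemize}
Unlike the weighted proof, we do not need the optimality-exchange argument here, because coverage is monotone in the endpoint.

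The third step is to finish as in Lemma~\ref{lem:correct20}. In the open case $h'=h$ and we are done. In the closed case, $\alpha_h\in\calA_{p_i}$ is dominated by $p_i$, and it remains to cover $\alpha_h$. If it is not already covered, we need $\alpha_h$ to begin at or before $p_{z_2+1}$ and to lie within $P(z_2,a_i^{z_2+1})$. Two things must be checked:
\begin{itemize}
\item If the coverage so far (through $p_{z_1}$ in the first subcase above, or through $p_{z_2}$ in the second) extends past $p_j$ but stops inside $\alpha_h$, then $p_{z_2+1}\in\alpha_h$ and every point of $\alpha_h$ from there on intersects $D_{p_i}$. By definition of $a_i^{z_2+1}$, this gives $\alpha_h\subseteq L$.
\item In the first subcase, where $L_2$ is not guided by $L_2'$, the endpoint $p_{z_2}$ lies at or beyond $p_{z_1}$ because $L_2$ contains $p_{z_1+1}$. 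So the coverage through $p_j$ is preserved.
\end{itemize}

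The main obstacle is the bookkeeping of the counterclockwise order. We must make sure that ``farthest'' comparisons are well defined relative to $p_i$ (respectively $p_{z_1+1}$), that larger endpoints imply containment of the relevant intervals, and that the case where some sublist wraps around to become all of $P$ is handled separately as trivial.
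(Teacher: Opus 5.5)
Your proof is correct. It differs from the paper's at the one step that carries weight: showing that $L_2$ reaches far enough.

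The paper first proves that $p_{z_1+1}$ lies in $\calA[f+1,h']$, reusing the exchange argument of Lemma~\ref{lem:correct20}. If $p_{z_1+1}$ lay beyond $p_j$, then replacing $S[g,h']$ by $S_{L_1}$ (with $|S_{L_1}|\le t'<t$) would give a smaller dominating set, contradicting the minimality of $S$. Only after that does the paper compare $L_2$ with $L_2'$.

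You instead split on where $p_{z_1+1}$ lies, and you observe that the ``overshoot'' case is harmless. In that case $L_1$ already covers $\calA[g,h']$. Since $L_2$ contains $p_{z_1+1}$, it can only extend the contiguous coverage $P[i,z_2]$, so the closed-case argument via $a_i^{z_2+1}$ goes through unchanged.

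This is legitimate because in the unweighted setting only coverage is at stake. In the weighted Lemma~\ref{lem:correct20}, the exchange is genuinely needed to justify $w'(L_2)\le w(S[f+1,h'])$ through $L_2'$. That concern has no analogue here.

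What each route buys:
\begin{itemize}
\item Your route is more elementary. It never invokes the minimality of $|S|$ directly; minimality enters only through the line-separable assignment of Lemma~\ref{lem:linesep}.
\item The paper's route is shorter, because it recycles the weighted proof.
\item You compare only counterclockwise endpoints, and you use $\alpha_g\subseteq P(b_i^i,a_i^i)$ for the part of $\calA[g,f]$ clockwise of $p_i$. This sidesteps the paper's intermediate assertion $L_1'\subseteq P(b_i^i,a_i^i)\cup L_1$, which need not hold as stated: a member of $\calL_{t'}(i)$ may extend farther clockwise than both $L_1$ and $P(b_i^i,a_i^i)$. The paper's conclusion $\calA[g,f]\subseteq P(b_i^i,a_i^i)\cup L_1$ is nonetheless correct.
\end{itemize}
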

\begin{proof}   
By the induction, $\calL_{t'}(i)$ contains a sublist $L_1'$ such that $\calA[g,f]\subseteq L_1'$; similarly, $\calL_{t-t'}(i')$ contains a sublist $L_2'$ such that $\calA[f+1,h']\subseteq L_2'$. One can still refer to Figure~\ref{fig:sublist40} (but the notation $z$ in the figure can be ignored).

Since $L_1'\in \calL_{t'}(i)$ and $L_1$ is the sublist of $\calL_{t'}(i)$ whose counterclockwise endpoint is farthest from $p_i$, we have $L_1'\subseteq P(b_i^i,a_i^i)\cup L_1$. As $\calA[g,f]\subseteq L_1'$, we obtain $\calA[g,f]\subseteq P(b_i^i,a_i^i)\cup L_1$. 

We claim that $p_{z_1+1}$ must be in a sublist of $\calA[f+1,h']$. The argument follows the same as in Lemma~\ref{lem:correct20} (e.g., by assuming all point weights in that lemma are the same). Consequently, $P[z_1+1,j]\subseteq \calA[f+1,h']$. Since $\calA[f+1,h']\subseteq L_2'$, we have $P[z_1+1,j]\subseteq L_2'$.
Since $L_2$ is the sublist of $\calL_{t-t'}$ containing $p_{z_1+1}$ whose counterclockwise endpoint is farthest from $p_{z_1+1}$, and $L_2'$ is in $\calL_{t-t'}$ and also contains $p_{z_1+1}$, we obtain that $P[z_1+1,j]\subseteq L_2$. Since $\calA[g,f]\subseteq P(b_i^i,a_i^i)\cup L_1$, $z_1$ is the counterclockwise endpoint of $L_1$, and $j$ is the counterclockwise endpoint of $\calA[g,h']$, we can derive $\calA[g,h']=\calA[g,f]\cup \calA[f+1,h']\subseteq P(b_i^i,a_i^i)\cup L_1\cup P[z_1+1,j]\subseteq P(b_i^i,a_i^i)\cup L_1\cup L_2$. 

Recall that $L$ is the union of the following four sublists: $P(b_i^i,a_i^i)$, $L_1$, $L_2$, and $P(z_2,a_i^{z_2+1})$.
Depending on whether $p_i$ is an open or closed rank-$t$ center, there are two cases. 

\begin{itemize}
\item 
If $p_i$ is an open center, then $h'=h$ and thus $\calA[g,h]=\calA[g,h']$. We have showed above that $\calA[g,h']\subseteq P(b_i^i,a_i^i)\cup L_1\cup L_2\subseteq L$.  Therefore, $\calA[g,h]\subseteq L$ holds. 

\item 
If $p_i$ is a closed center, then $h'=h-1$ and $\calA[g,h]=\calA[g,h']\cup \alpha_{h}$. 

If $\calA[g,h]\subseteq P(b_i^i, a_i^i)\cup L_1\cup L_2$, then we still have $\calA[g,h]\subseteq L$. Otherwise, it must be the case that $p_{z_2+1}\in \alpha_{h}$. Consequently, since $p_i$ dominates $\alpha_{h}$ and $p_{j}$ is the counterclockwise endpoint of $\calA[g,h']$, by the definition of $a_i^{z_2+1}$, $\alpha_{h}\subseteq P[j+1,a_i^{z_2+1})$. As $\calA[g,h']\subseteq P(b_i^i,a_i^i)\cup L_1\cup L_2$, we obtain $\calA[g,h]=\calA[g,h']\cup \alpha_{h}\subseteq P(b_i^i,a_i^i)\cup L_1\cup L_2\cup\alpha_{h}\subseteq P(b_i^i,a_i^i)\cup L_1\cup L_2\cup P[j+1,a_i^{z_2+1})$. Since $p_j\in L_2$ and $z_2$ is the counterclockwise endpoint of $L_2$, we have $L_2\cup P[j+1,a_i^{z_2+1})
\subseteq L_2\cup P(z_2,a_i^{z_2+1})$. Hence, 
$P(b_i^i,a_i^i)\cup L_1\cup L_2\cup P[j+1,a_i^{z_2+1})\subseteq P(b_i^i,a_i^i)\cup L_1\cup L_2\cup P(z_2,a_i^{z_2+1})=L$. Therefore, we obtain $\calA[g,h]\subseteq L$.
\end{itemize}

In summary, $\calA[g,h]\subseteq L$ holds in both cases. The lemma thus follows. 
\end{proof}

\subsubsection{Case (2): The main sublist of $\boldsymbol{p_i}$ is not an end sublist of $\boldsymbol{\calA[g,h]}$}

As $p_i$ is a rank-$t$ center of $\calA[g,h]$, one of the two end sublists of $\calA[g,h]$ must be in the group $\calA_{p_i}$. With loss of generality, we assume that the counterclockwise end sublist of $\calA[g,h]$, i.e., $\alpha_h$, is in $\calA_{p_i}$. Note that $\alpha_h$ must be a secondary sublist of $p_i$. Lemma~\ref{lem:decompbi} still applies here. We let $t'$ and $f$ be from Lemma~\ref{lem:decompbi}. 

Now consider the $t$-th iteration of the algorithm in which $p_i$ is processed via the bidirectional processing procedure and the value $t'$ is examined. 
In this step, the algorithm constructs a sublist $L \in \calL_t(i)$ as the union of three sublists: $P(b_i^i, a_i^i)$,  the sublist $L_x$ of $\calL_{t'}(i)$ whose counterclockwise endpoint is farthest from $p_i$, and the sublist $L_y$ of $\calL_{t+1-t'}(i)$ whose clockwise endpoint is farthest from $P$. Also, $S_L = S_{L_x} \cup S_{L_y}$. The following lemma completes our proof for Case (2). 

\begin{lemma}
$\calA[g,h]\subseteq L$. 
\end{lemma}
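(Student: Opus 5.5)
We mirror the proof of Lemma~\ref{lem:correct30}, replacing the weight comparison with a comparison of how far the sublists reach from $p_i$. Let $f$ and $t'$ be as in Lemma~\ref{lem:decompbi}; then $p_i$ is a rank-$t'$ center of $\calA[f,h]$ and a rank-$(t+1-t')$ center of $\calA[g,f]$, with $2\leq t'\leq t-1$. By the induction hypothesis (Lemma~\ref{lem:unweightcorrect10} for $t'$ and for $t+1-t'$, both smaller than $t$), $\calL_{t'}(i)$ contains a sublist $L_x'$ with $\calA[f,h]\subseteq L_x'$, and $\calL_{t+1-t'}(i)$ contains a sublist $L_y'$ with $\calA[g,f]\subseteq L_y'$. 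In the bidirectional processing procedure for this value $t'$, the algorithm chooses $L_x\in\calL_{t'}(i)$ with counterclockwise endpoint farthest from $p_i$, and $L_y\in\calL_{t+1-t'}(i)$ with clockwise endpoint farthest from $p_i$.

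\textbf{Key step: dominance of the greedy choices.} By invariant (3), every sublist of $\calL_{t'}(i)$ contains $p_i$. So any such sublist consists of a clockwise part ending at $p_i$ and a counterclockwise part $P[i,\cdot]$ starting at $p_i$. The counterclockwise part of $L_x'$ is contained in that of $L_x$, by the choice of $L_x$. The clockwise part of $L_x'$ needs separate handling. Since $\calA[f,h]$ begins with $\alpha_f$, the main sublist containing $p_i$, the portion of $\calA[f,h]$ clockwise of $p_i$ lies inside $\alpha_f$. Because $p_i$ dominates $\alpha_f$, that portion lies in $P(b_i^i,a_i^i)$, which is the maximal sublist around $p_i$ dominated by $p_i$. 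Hence $\calA[f,h]\subseteq P(b_i^i,a_i^i)\cup L_x$. Symmetrically, using $L_y$ and the clockwise direction, $\calA[g,f]\subseteq P(b_i^i,a_i^i)\cup L_y$.

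\textbf{Conclusion.} Since $\calA[g,h]=\calA[g,f]\cup\calA[f,h]$, it follows that $\calA[g,h]\subseteq P(b_i^i,a_i^i)\cup L_x\cup L_y=L$.

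\textbf{Main obstacle.} The one point needing care is ensuring that "farthest endpoint" captures containment. We must argue that a contiguous sublist containing $p_i$ whose counterclockwise endpoint is no farther than that of $L_x$ has its counterclockwise part inside $L_x$. There is a possible wraparound issue if $L_x$ or $L_x'$ is all of $P$, or nearly so. If some sublist already equals $P$, the algorithm has already stopped and the claim is trivial. Otherwise, every sublist containing $p_i$ is a proper arc through $p_i$, so "farther counterclockwise from $p_i$" is a linear order along $P[i,i-1]$, and containment of the counterclockwise parts follows.
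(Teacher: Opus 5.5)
Your proof is correct and follows the paper's route for this Case~(2) lemma: decompose via Lemma~\ref{lem:decompbi}, apply induction at ranks $t'$ and $t+1-t'$, then use dominance of the greedy choices $L_x$ and $L_y$. Your key step is actually more careful than the paper's, which asserts $L_x'\subseteq P(b_i^i,a_i^i)\cup L_x$ even though the farthest-counterclockwise-endpoint choice does not guarantee this ($L_x'$ may extend clockwise beyond both $L_x$ and $p_{b_i^i}$); you instead compare only counterclockwise parts and cover the clockwise part of $\calA[f,h]$ by $\alpha_f\subseteq P(b_i^i,a_i^i)$, as the weighted Lemma~\ref{lem:correct30} does.
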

\begin{proof}    
By the induction hypothesis on $t'$, $\calL_{t'}(i)$ contains a sublist $L_x'$ such that $\calA[f,h] \subseteq L_x'$; similarly, $\calL_{t+1-t'}(i)$ contains a sublist $L_y'$ such that $\calA[g,f] \subseteq L_y'$. 
One can still refer to Figure~\ref{fig:sublist60} (but the notation $p_x$ and $p_y$ in the figure can be ignored).

As $L_x$ is the sublist of $\calL_{t'}(i)$ whose counterclockwise endpoint is farthest from $p_i$ and $L_x'\in \calL_{t'}(i)$, we obtain $L_x'\subseteq P(b_i^i,a_i^i)\cup L_x$. Similarly, we have $L_y'\subseteq P(b_i^i,a_i^i)\cup L_y$.
Consequently, since $\calA[f,h] \subseteq L_x'$ and $\calA[g,f] \subseteq L_y'$, we can derive $\calA[g,h]=\calA[g,f]\cup \calA[f,h]\subseteq L_x'\cup L_y'\subseteq L_x\cup L_y\cup P(b_i^i,a_i^i)=L$. 
\end{proof}

\subsection{Time analysis and implementation}
\label{sec:unweighttime}

In each $t$-th iteration with $1\leq t\leq k$, we perform $O(tn)$ operations for computing indices $a_i^j$ and $b_i^j$ and $O(tn)$ counterclockwise/clockwise farthest enclosing sublist queries: Given a point $p\in P$, find from a set $\calL$ of sublists a sublist containing $p$ with the farthest counterclockwise/clockwise endpoint from $p$. Computing indices $a_i^j$ and $b_i^j$ takes $O(\log^2 n)$ time by Lemma~\ref{lem:firstout}. The counterclockwise/clockwise farthest enclosing sublist query problem has been studied in \cite{ref:TkachenkoDo25}; we summarize the result in the following lemma. 

\begin{lemma}{\em(\cite{ref:TkachenkoDo25})}
\label{lem:fes}
Given a set $\calL$ of $m$ sublists of $P$, we can construct a data structure for $\calL$ in $O(m\log m)$ time such that each counterclockwise/clockwise farthest enclosing sublist query can be answered in $O(\log m)$ time.
\end{lemma}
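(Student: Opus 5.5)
We need a data structure for farthest enclosing sublist queries over $m$ sublists of the cyclic list $P$: given $p_q\in P$, return a sublist containing $p_q$ whose counterclockwise endpoint is counterclockwise-farthest from $p_q$. The plan is to reduce this to range-maximum queries on an array indexed by the start positions of the sublists.

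First, unroll the cyclic order. Represent $P$ as the doubled sequence $p_1,\ldots,p_n,p_{n+1},\ldots,p_{2n}$ with $p_{n+\ell}=p_\ell$. Each sublist $L=P[s,e]$ becomes an interval $[s,e']$ with $s\in[1,n]$, where $e'=e$ if $s\le e$ and $e'=e+n$ otherwise. A sublist equal to all of $P$ is handled separately: if one exists, it is always a valid answer, and its counterclockwise endpoint is as far as possible. So we store any such full sublist and return it whenever it exists.

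Now fix a query point $p_q$ with $q\in[1,n]$. A sublist $[s,e']$ contains $p_q$ exactly when either $s\le q\le e'$ or $s\le q+n\le e'$. In either case the counterclockwise distance from $p_q$ to the endpoint is $e'-q$ or $e'-(q+n)$, respectively. The answer is therefore the better of two candidates:
\begin{itemize}
\item among intervals with $s\le q$, the one with maximum $e'$, provided it satisfies $e'\ge q$;
\item among intervals with $s\le q+n$, the one with maximum $e'$, provided it satisfies $e'\ge q+n$.
\end{itemize}
Since every $s\le n<q+n$, the second candidate is simply the global maximum $e'$, tested against $q+n$.

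For the first candidate, maximizing $e'$ over intervals with $s\le q$ is a prefix-maximum query. Sort the intervals by $s$ in $O(m\log m)$ time and precompute prefix maxima of $e'$ in $O(m)$ time. A query then binary-searches for $q$ among the sorted starts in $O(\log m)$ time and reads off the stored prefix maximum.

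Correctness needs one check. The maximum-$e'$ interval among those with $s\le q$ contains $p_q$ exactly when any interval with $s\le q$ and $e'\ge q$ exists, so a failing test correctly reports no candidate. Both candidate distances are measured from the same point $p_q$ and are below $n$ for non-full sublists, so comparing them is valid, and the clockwise version is symmetric. I expect the main obstacle to be exactly this wraparound bookkeeping: ensuring that the two unrollings cover every way a sublist can contain $p_q$, and that distances exceeding $n$ cannot arise except from full sublists.
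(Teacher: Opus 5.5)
Your proposal is correct. Three steps carry it:
\begin{itemize}
\item Unroll the cycle.
\item Observe that a non-full sublist $[s,e']$ with $e'-s\le n-2$ can contain at most one of the two copies $q$ and $q+n$, so the two cases are disjoint. Intervals with $s\le q$ can only contain $q$, and any interval with $e'\ge q+n$ must have $s>q$.
\item Answer the first case with a prefix maximum of $e'$ over start-sorted intervals, and the second with the global maximum of $e'$.
\end{itemize}
This gives the claimed $O(m\log m)$ preprocessing and $O(\log m)$ query time.

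The paper itself does not prove this lemma; it imports it from \cite{ref:TkachenkoDo25}. Your argument therefore serves as a self-contained justification rather than an alternative to an in-text proof.

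Your convention of always returning a full sublist is the right one for how the query is used in Section~\ref{sec:domunwgt}, since a full $L_2$ trivially contains $P[z_1+1,j]$. In any case, the unweighted algorithm stops as soon as a full sublist is produced, so the queried sets $\calL_{t-t'}$ never contain one.
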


In our algorithm, the total size of the involved sets $\calL$ for which we need to construct the above data structure is $O(tn)$ in each $t$-th iteration. Hence, the total time of each $t$-th iteration is $O(tn\log^2 n)$. As there are $k$ iterations, the runtime of the whole algorithm is $O(k^2n\log^2 n)$. We conclude this section with the following theorem.

\begin{theorem}\label{thm:domsetunwgt}
Given a set of $n$ disks whose centers are in convex position in the plane, a smallest dominating set of the disk graph can be computed in $O(k^2n\log^2 n)$ time, where $k$ is the smallest dominating set size. 
\end{theorem}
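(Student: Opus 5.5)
The theorem collects what has already been established in this section, so I will give its proof as an assembly of earlier lemmas. There are three parts: correctness, termination within $k$ iterations, and a bound on the total running time.

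\textbf{Correctness and termination.} I apply Lemma~\ref{lem:unweightcorrect10} with $t=k$, where $k$ is the minimum dominating set size. Fix a smallest dominating set $S$ and the line-separable assignment from Lemma~\ref{lem:linesep}. As noted in Section~\ref{sec:overview}, every $p_i\in S$ is a rank-$k$ center of a sublist $\calA[g,h]$ equal to $P$. Lemma~\ref{lem:unweightcorrect10} then gives a sublist $L\in\calL_k(i)$ with $L=P$, so the algorithm stops no later than iteration $k$. Let $t^*\le k$ be the first iteration that produces some $L=P$. The invariants give that $S_L$ dominates $P$ and $|S_L|\le t^*\le k$. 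Minimality of $k$ then forces $|S_L|=k$, so $S_L$ is a smallest dominating set; this is exactly Lemma~\ref{lem:unweightcorrect}. The substantive part here is the induction behind Lemma~\ref{lem:unweightcorrect10}, and I take it as already proved. Its delicate step is the exchange claim that $p_{z_1+1}$ falls inside $\calA[f+1,h']$. In the unweighted setting that claim reads: otherwise $S_{L_1}$ together with $S\setminus S[g,h']$ would be a dominating set strictly smaller than $S$.

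\textbf{Running time.} Preprocessing uses Lemma~\ref{lem:firstout} and takes $O(n\log^2 n)$ time; the initial lists $P(b_i^i,a_i^i)$ then cost $O(n\log^2 n)$ in total. Now fix an iteration $t$. For each $p_i$ and each $t'$, the counterclockwise and clockwise procedures do the following:
\begin{itemize}
\item retrieve the stored extreme sublist of $\calL_{t'}(i)$ in $O(1)$ time;
\item make one farthest-enclosing query on $\calL_{t-t'}$, in $O(\log n)$ time by Lemma~\ref{lem:fes};
\item compute one index $a_i^{z_2+1}$, in $O(\log^2 n)$ time.
\end{itemize}
The bidirectional procedure costs $O(1)$ per $t'$. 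Summed over $i$ and $t'$, iteration $t$ performs $O(tn)$ operations, each costing at most $O(\log^2 n)$.

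For the query structures, I build one structure of Lemma~\ref{lem:fes} for each $\calL_{t'}$ as soon as that set is finished. Since $|\calL_{t'}|=O(t'n)$, the total construction cost up to iteration $k$ is $\sum_{t'\le k}O(t'n\log n)$, which is within budget. Consequently iteration $t$ costs $O(tn\log^2 n)$, and summing over $t=1,\dots,k$ gives $O(k^2 n\log^2 n)$. The algorithm does not need to know $k$ in advance, because it stops at the first iteration that produces $P$.

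\textbf{Main obstacle.} The only point needing care is that each structure is built once and reused across iterations, rather than rebuilt in every iteration over all earlier $\calL_{t'}$. Even if it were rebuilt each iteration, the cost would be $\sum_t O(tn\log n)$ per iteration, which still fits within $O(k^2n\log^2 n)$. So the stated bound holds either way.
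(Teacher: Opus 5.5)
Your proposal is correct in its main line and follows the paper's proof. Correctness and termination come from Lemma~\ref{lem:unweightcorrect10} applied to a rank-$k$ sublist equal to $P$, together with minimality of $k$. The running time comes from $O(tn)$ index computations and farthest-enclosing queries in iteration $t$, each costing $O(\log^2 n)$, summed over $t\le k$. Building the structure of Lemma~\ref{lem:fes} once for each finished $\calL_{t'}$ and reusing it is the right way to make precise the paper's statement that the sets needing a structure have total size $O(tn)$ per iteration.

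\textbf{The closing remark is false and should be deleted.} Suppose the structures for $\calL_1,\dots,\calL_{t-1}$ were rebuilt in every iteration $t$. Then iteration $t$ would cost
\[
\sum_{t'<t}O(t'n\log n)=O(t^2n\log n),
\]
which gives $O(k^3n\log n)$ overall. This exceeds $O(k^2n\log^2 n)$ whenever $k=\omega(\log n)$. So the reuse is not optional; it is exactly what yields the bound.

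\textbf{A related point that the paper also leaves implicit.} You charge $O(1)$ for each bidirectional step and $O(\log^2 n)$ for each counterclockwise step. That only holds if $S_L=S_{L_1}\cup S_{L_2}$ is not formed explicitly. An explicit union costs $O(t)$ per new sublist, which would again give $O(k^3n)$ in total. Instead, store $S_L$ implicitly as pointers to $S_{L_1}$ and $S_{L_2}$, and expand only the final set. The pointer structure for a list created in iteration $t$ has at most $\max(1,2t-2)$ leaves, by the recurrences for the two kinds of procedures. So the expansion takes $O(k)$ time.
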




\bibliographystyle{plainurl}
\bibliography{refs}
\end{document}